\begin{document}
\renewcommand{\algorithmcfname}{ALGORITHM}



\title{How Good Is a Two-Party Election Game?}

\author{Chuang-Chieh Lin\inst{1} \and Chi-Jen Lu\inst{1}\and Po-An Chen\inst{2}}

\authorrunning{C.-C. Lin, C.-J. Lu, and P.-A. Chen}

\institute{Institute of Information Science, Academia Sinica\\128 Academia Road, Section 2, Nankang, Taipei 11529, Taiwan
\\\email{\{josephcclin,cjlu\}@iis.sinica.edu.tw}\and
Institute of Information Management, National Chiao Tung University\\1001 University Rd, Hsinchu City 300, Taiwan \\\email{poanchen@nctu.edu.tw}}

\maketitle
\begin{abstract}
In this paper, we propose a simple and intuitive model to investigate the efficiency of the 
two-party election system, especially regarding the nomination process. Each of the two parties has its own candidates, and each of them brings utilities for the people including the supporters and non-supporters. In an election, each party nominates exactly one of its candidates to compete against the other party's. The candidate wins the election with higher odds if he or she brings more utility for all the people. 
We model such competition as a {\em two-party election game} such that each party is a player with two or more pure strategies corresponding to its potential candidates, and the payoff of each party is a mixed utility from a selected pair of competing candidates. 

By looking into the three models, namely, the linear link, Bradley-Terry, and the softmax models, which differ in how to formulate a candidate's winning odds against the competing candidate, we show that the two-party election game may neither have any pure Nash equilibrium nor a bounded price of anarchy. 
Nevertheless, by considering the conventional 
{\em egoism}, which states that any candidate benefits his/her party's supporters more than any candidate from the competing party does, we prove that the two-party election game in both the linear link model and the softmax model always has pure Nash equilibria, and furthermore, the price of anarchy is constantly bounded. 
\end{abstract}

\keywords{Two-party election game; Nash equilibrium; price of anarchy; egoism}

\section{Introduction}
\label{sec:intro}


Most of the literature on voting theory is devoted to studying the voters' behavior on a micro level and equilibria under various election rules, e.g., majority, plurality, Borda count, etc.
As we know, voters are strategic, having different preferences for the candidates. Their behaviors are sometimes complex and dependent on the voting procedure which involves the ballot structure and the allocation rule. 
Due to many factors involved, 
it is thus difficult for any micro-level model to capture the overall efficiency of a political system consisting of two or multiple competing parties.

Modern democracy by and large runs on political parties nominating their ``best" candidates to compete in elections by appealing mainly to their supporters. 
Not much has been said about how the parties form their nomination strategies in terms of candidate selection in an election, and how this process stabilizes and affects the benefits of the people in the end. 
Usually, a candidate nominated by a party responds and caters more to the needs of his/her party's supporters and less to the needs of the non-supporters. This subsequently determines different benefits/utilities that a candidate could bring to the supporters and to the non-supporters.\footnote{Note that a voter who obtains the same amount of benefits from any party's candidate can be considered as a supporter of an arbitrary party.} Furthermore, by Duverger's law suggesting that the plurality voting favors the two-party system~\cite{Duv54}, we consider a democracy consisting of two parties. Such two-party or nearly two-party system exists in many democratic countries such as the United States, the United Kingdom, Taiwan, etc. 
 
In this paper, we consider a simplified macro model instead. We formalize the political competition between two parties as a \emph{two-player election game}, in which each party as a player has candidates as its strategies and its payoff is the expected utility for its supporters considering that their selected candidate may win or lose. The odds of winning an election for a candidate nominated by a party over another candidate nominated by the competing party could arguably be related to two factors - the total benefits that he/she brings to the whole society, including the supporters and non-supporters, and those that his/her competitor brings. 

Based on the proposed simple and intuitive models, we would like to know how the two-party system benefits people by asking whether a two-party system will stabilize, i.e., having an equilibrium, and in particular whether the competition between the two parties in such a system always benefits the people optimally. If not, then how good/bad can the selected pair of competing candidates be at equilibrium, compared with the optimal pair? Specifically, we answer the former question by proving the existence of a Nash equilibrium, and analyze the latter in terms of the price of anarchy~\cite{KP09} which measures the efficiency of an equilibrium of a game. By considering the conventional 
{\em egoism}, which states that any candidate benefits his/her party's supporters more than any candidate from the competing party does, we prove that the two-party election game in some models always has pure Nash equilibria, and the price of anarchy is constantly bounded. This shows that in some sense the game between two parties in candidate nomination for an election benefits the people with a social welfare at least constantly far from the optimum.


\subsection{Our Results}

There are issues concerning previous work on political competition, which we list in the following. 
\begin{itemize}
	\item Most studies on political competitions consider policies or political positions on 
	a unidimensional metric space (e.g., a closed interval on the real line), except for very few studies 
	 (e.g., L\'{o}pez et al.~\cite{LRL2007} considers political positions as points in a two-dimensional plane). 
	 However, multiple policy dimensions are sometimes necessary to capture voter preferences. Then, what a 
	 reasonable dimension of the policy space (or candidate pool) is may not be clear. 
	 Besides, such a game of political competition may not have any pure Nash equilibrium when the 
	 policy space is multi-dimensional~\cite{Dug16}. 
	\item What kind of voter preference (e.g., proximity in a metric space) on the policy space is reasonable? 
	Besides, voter preferences may not be identical and may even be dependent. 
	\item In practical cases, the policy space is finite and believed to be countable. 
	For example, it is believed to be in proportion to the number of candidates. 
	\item What kind of voting procedures (e.g., Plurality Voting, Borda Count, Negative Voting, 
	Approval Voting) should we apply? Moreover, different voting procedures may incur 
	different strategic behaviors of voters.
\end{itemize}

In this paper, we bypass the above issues by modeling the political competition as a two-player game, 
which is called a two-party election game, and analyze the efficiency of the game on a macro scale. Specifically, the two parties are modeled as two strategic players, and each player's strategies correspond to its candidates, which are countably many and finite. 
Each candidate of a party brings utilities for the supporters of its party and also for the supporters of 
the competing party. 
A candidate wins the competition with higher odds if one brings more utility for all the voters. 
Each player's payoff is the expected utility its supporters get from the two parties. 
Moreover, we conjecture that the winning probability of a candidate is a function of the total utility 
it can bring. By considering three kinds of functions modeling the winning probability, we investigate if 
such a political competition between two parties always has a pure Nash equilibrium. Moreover, if 
the answer is affirmative, we analyze the price of anarchy of the game. 

The setting of our model, which is inspired by dueling bandits~\cite{AJK2014,YBKJ2012}, 
is simple and intuitive. We prove that the two-party election game may neither have any pure 
Nash equilibrium nor bounded price of anarchy. However, under the standard assumption of  
{\em egoism}, which states that any candidate benefits its supporters 
more than those from the opposite party, we prove that the two-party election game in both 
the linear link model and the softmax model always has pure Nash equilibria, and furthermore, 
the price of anarchy is bounded by a constant. Our findings suggest that, under the egoism 
assumption, the two-party system is stable and efficient for the people. Note that we are modeling in this paper the decision making about the candidate choices of a two-party election system, depending on the utility estimates for the supporters and non-supports. This is less about the voters' behavior or how the voters are modeled. We focus on the candidate choice instead of voting itself, which is the focus in most of the voting theory. 

\subsection{Related work}
\label{subsec:related_work}

\paragraph{Explanation of Duverger's law.} 
There have been studies that formalize Duverger's law. It can be explained either by the strategic 
behavior of the voters~\cite{Fed92,Fey97,Del2013,MW93,Pal89} or that of the 
candidates~\cite{Pal84,Web92,Cal05,CW07}. The latter, which is more relevant to our work, 
considers models where two established candidates face a potential entrant, i.e., 
the third candidates, as a threat, such that the two established candidates choose policies 
(i.e., points in the policy space) to preempt successful entry. Under mild assumptions on voter
preferences, Dellis~\cite{Del2013} explained why a two-party system emerges under plurality voting 
and other voting procedures permitting truncated ballots. 

\paragraph{Distortion.}
The notion of distortion, introduced by Procaccia and Rosenschein~\cite{PR2006}, resembles 
the notion of price of anarchy, although the latter is used in games of strategic players. 
The distortion measures the rate of social welfare decrease (or social cost increase) 
when voting is used. The relevant work includes the studies which focus on cardinal preferences 
of voters (e.g.,~\cite{CP2011,PR2006}) and those which focuses on metric preferences 
of voters (e.g.,~\cite{ABP2015,AP2016,CDK2017}). Caragiannis and Procaccia~\cite{CP2011} 
discussed the distortion when each voter's vote receives an embedding, which maps 
the preference to the output vote. Cheng et al.~\cite{CDK2017} restricted their attention 
to the distribution of voters and candidates and justified that the expected distortion is 
small when candidates are drawn from the same distribution of voters.

\paragraph{On equilibria of political competitions.} 
Most of the works on equilibria of a political competition are mainly based on 
{\em Spatial Theory of Voting} (e.g.,~\cite{Dow57,LRL2007,Pal84,Web92}, which 
was initialized by Downs~\cite{Dow57} and can be traced back to Hotelling's work~\cite{Hot29}). 
In the Hotelling-Downs model, there are two established parties facing voters with 
symmetric single-peaked preferences over a unidimensional metric space. Each party 
chooses a political position (or, a policy) that is as close as possible to the greatest number of 
voter preferences. The Spatial Theory of Voting states that when the policy (i.e., candidate) 
space is unidimensional, voter preferences are single-peaked and two parties compete for one 
position, the parties' strategies would be determined by the median voter's preference. 
However, for political positions or policies over a multi-dimensional space, pure Nash equilibria 
may never exist~\cite{Dug16}.


\section{Preliminaries}
\label{sec:models}

Let $A$, $B$ be two political parties devoting to an election, such that party $A$ has $m$ candidates 
$A_1,\ldots A_m$ and party $B$ has $n$ candidates $B_1,\ldots,B_n$. Each party has to select one of its candidates to participate in the election. Without loss of generality, $m\geq 2, n\geq 2$. Assume that the society consists of 
the supporters of $A$ and those of $B$. 
Let $u_A(A_i)$ and $u_B(A_i)$ denote the utilities of candidate $A_i$ for party $A$'s supporters and party $B$'s supporters, respectively, for each $i\in [m]$. 
Likewise, let $u_A(B_j)$ and $u_B(B_j)$ denote the utilities of candidate $B_j$ for party $A$'s supporters and party $B$'s supporters, respectively, for each $j\in [n]$. 
Assume that candidates in each party are 
sorted according to the utilities for his or her party's supporters. 
Namely, $u_A(A_1)\geq u_A(A_2)\geq \ldots \geq u_A(A_m)$ and $u_B(B_1)\geq u_B(B_2)\geq \ldots \geq u_B(B_n)$. 
To break the symmetry, assume that $u_A(A_1)\geq u_B(B_1)$. We use $u(A_i)$ (resp., $u(B_j)$) 
to denote the (total) social utility of candidate $A_i$ (resp., $B_j$) for the society, i.e., 
\[u(A_i) = u_A(A_i) + u_B(A_i)\] 
\[(resp., u(B_j) = u_A(B_j) + u_B(B_j)).\] 
Assume that the social utilities are bounded. That is, $u(A_i), u(B_j)\in [0, b]$ for 
some real $b \geq 1$, for each $i\in [m],j\in [n]$. 

Parties $A$ and $B$ represent the two players such that $A$ and $B$ have $m$ and $n$ strategies, respectively. 
A pure strategy of a party is a selected candidate for the election. An important property that we want to preserve in the game is the following: \emph{a party wins the election with higher odds if it selects a candidate with a higher social utility defined above}. 
Inspired by the exploration method used in the multi-armed bandit problem~\cite{Kul00} and the 
probabilistic comparison used in the dueling bandits problem~\cite{AJK2014,YBKJ2012}, 
we formulate the winning odds, $p_{i,j}$, which stands for the probability of~$A_i$ wins against $B_j$, in the following three models to define the payoffs for each party in such a election game. 
\begin{itemize}
\item Linear link model~\cite{AJK2014}: \[p_{i,j} := \frac{1 +(u(A_i)-u(B_j))/b}{2}.\] 
\item Bradley-Terry model~\cite{Bra54,YBKJ2012}: \[p_{i,j} := \frac{u(A_i)}{u(A_i)+u(B_j)}.\] 
\item Softmax model~\cite{Kul00,SB98}\footnote{Actually, we consider a constrained softmax model here. 
	The winning probability is based on the Boltzmann distribution with $kT = b$, where $k$ is 
	the Boltzmann constant and $T$ is the temperature of the system}: 
	\[p_{i,j} := \frac{e^{u(A_i)/b}}{e^{u(A_i)/b}+e^{u(B_j)/b}}.\] 
\end{itemize}

\subsection{Two-Party Election Game}
Now, we are ready to use the concepts introduced above to define the payoffs of our two-party election game.
When the context is clear, we use $(i,j)$ to denote the state of game in which $A_i$ and $B_j$ are selected in the election. The payoff of party $A$ (resp., $B$) in state $(i,j)$, say $a_{i,j}$ 
(resp., $b_{i,j}$), is defined as the expected utilities $A$ (resp., $B$) obtains in state $(i,j)$. 
Namely, 
\begin{eqnarray*}
a_{i,j} &=& p_{i,j}u_A(A_i) + (1-p_{i,j})u_A(B_j)\\
b_{i,j} &=& (1-p_{i,j})u_B(B_j) + p_{i,j}u_B(A_i).
\end{eqnarray*} 
The {\em social welfare} of state $(i,j)$ is $SU_{i,j} = a_{i,j} + b_{i,j}$. We say that 
a state $(i,j)$ is a {\em pure Nash equilibrium} (PNE) if $a_{i',j}\leq a_{i,j}$ for any $i'\neq i$ 
and $b_{i,j'} \leq b_{i,j}$ for any $j'\neq j$. That is, in state $(i,j)$, neither $A$ nor $B$ 
wants to deviate from his or her strategy. The {\em price of anarchy} (PoA) of the game is defined as 
\[
\frac{SU_{i^*,j^*}}{SU_{\hat{i},\hat{j}}} = 
\frac{a_{i^*,j^*}+b_{i^*,j^*}}{a_{\hat{i},\hat{j}}+b_{\hat{i},\hat{j}}},
\]
where $(i^*,j^*) = \arg\max_{(i,j)\in [m]\times [n]}(a_{i,j}+b_{i,j})$ is the {\em optimal state}, 
which has the best (i.e., highest) social welfare among all possible states, and $(\hat{i},\hat{j}) = 
\arg\min_{(i,j)\in [m]\times [n]\atop (i,j)\mbox{\scriptsize \;is a PNE}}(a_{i,j}+b_{i,j})$, 
which is the PNE with the worst (i.e., lowest) social welfare.

The following properties will be needed throughout the paper.

\begin{definition}
We say that the two-party election game is {\em egoistic} if $u_A(A_i)> u_A(B_j)$ and $u_B(B_j)> u_B(A_i)$ for all $i\in [m], j\in [n]$, . 
\end{definition}
This guarantees that \emph{any candidate benefits its supporters more than those from the competing party}, which is reasonable and consistent with the real world.

\begin{definition}
\label{defn:strategy_domination}
For party $X\in \{A, B\}$, we say that strategy $i$ {\em weakly dominates} $i'$ if $i < i'$ and 
$u(X_i)\geq u(X_{i'})$. We say that strategy $i$ {\em dominates} $i'$ if $i$ weakly dominates $i'$ and $u(X_i) > u(X_{i'})$. 
\end{definition}

\begin{remark} 
For all the three models, $p_{i,j}\geq p_{i',j}$ if $u(A_i)\geq u(A_{i'})$.
\end{remark}

\begin{lemma}\label{lem:dominatingNE}
If strategy 1 weakly dominates each $i\in [n]\setminus\{1\}$ in party $A$, then $(1,j^{\#})$ is a PNE of the egoistic two-party 
election game for $j^{\#} = \arg\max_{j\in [m]} b_{1,j}$. Similarly, if strategy 1 weakly dominates each $j\in [m]\setminus \{1\}$ in party $B$, then $(i^{\#}, 1)$ is a PNE for $i^{\#} = \arg\max_{i\in [n]} a_{i,1}$. 
\end{lemma}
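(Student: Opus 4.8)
The plan is to verify the two ``no profitable deviation'' conditions in the definition of a PNE directly, and to get the second assertion of the lemma from the first by the symmetry that interchanges $A$ with $B$ and $p_{i,j}$ with $1-p_{i,j}$. Consider the first assertion. Since $j^{\#}$ is chosen to maximize $b_{1,j}$ over the strategies of $B$, the inequality $b_{1,j'}\le b_{1,j^{\#}}$ holds for every $j'$, so party $B$ has no incentive to deviate from $(1,j^{\#})$; this half is immediate. It therefore remains to show that party $A$ has no incentive to deviate either, i.e. that $a_{i',j^{\#}}\le a_{1,j^{\#}}$ for every $i'\neq 1$, and this is where the hypotheses (egoism, weak domination, the sorting convention, the Remark) come in.

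First I would expand the payoff difference using $a_{i,j}=p_{i,j}u_A(A_i)+(1-p_{i,j})u_A(B_j)$ and collect the two terms multiplying $u_A(B_{j^{\#}})$, so that this common quantity cancels out and one is left with
\[
a_{1,j^{\#}}-a_{i',j^{\#}} = p_{1,j^{\#}}\bigl(u_A(A_1)-u_A(B_{j^{\#}})\bigr) - p_{i',j^{\#}}\bigl(u_A(A_{i'})-u_A(B_{j^{\#}})\bigr).
\]
Then I would assemble three facts. (i) Egoism gives $u_A(A_{i'})>u_A(B_{j^{\#}})$, and the sorting convention $u_A(A_1)\ge u_A(A_{i'})$ gives $u_A(A_1)-u_A(B_{j^{\#}})\ge u_A(A_{i'})-u_A(B_{j^{\#}})>0$; so both parenthesized factors are positive and the one attached to $p_{1,j^{\#}}$ is the larger. (ii) Weak domination of $i'$ by strategy $1$ means $u(A_1)\ge u(A_{i'})$, whence $p_{1,j^{\#}}\ge p_{i',j^{\#}}$ by the Remark. (iii) In all three models $p_{i,j}\in[0,1]$ (for the linear link model because $u(A_i),u(B_j)\in[0,b]$), in particular $p_{i',j^{\#}}\ge 0$. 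Chaining $p_{1,j^{\#}}\bigl(u_A(A_1)-u_A(B_{j^{\#}})\bigr)\ge p_{i',j^{\#}}\bigl(u_A(A_1)-u_A(B_{j^{\#}})\bigr)\ge p_{i',j^{\#}}\bigl(u_A(A_{i'})-u_A(B_{j^{\#}})\bigr)$ then shows the displayed difference is nonnegative, as needed.

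For the second assertion I would write $b_{i,j}=(1-p_{i,j})u_B(B_j)+p_{i,j}u_B(A_i)$ and first record the $B$-side analogue of the Remark, namely that $1-p_{i,j}$ is nondecreasing in $u(B_j)$ in each of the three models (a one-line check for each formula). The argument above then transfers verbatim with $A_1\leftrightarrow B_1$, $u_A\leftrightarrow u_B$, and $p_{i,j}\leftrightarrow 1-p_{i,j}$: weak domination of $j'$ by strategy $1$ in $B$ gives $1-p_{i^{\#},1}\ge 1-p_{i^{\#},j'}$, egoism gives $u_B(B_{j'})>u_B(A_{i^{\#}})$, the sorting convention gives $u_B(B_1)\ge u_B(B_{j'})$, so $b_{i^{\#},1}\ge b_{i^{\#},j'}$ for all $j'$, and $a_{i,1}\le a_{i^{\#},1}$ for all $i$ holds by the choice of $i^{\#}$.

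There is no deep obstacle here; the only two points requiring care are (a) carrying out the rearrangement so that $u_A(B_{j^{\#}})$ (resp. $u_B(A_{i^{\#}})$) drops out cleanly, turning the question into a comparison of two products of nonnegative reals, and (b) checking that the three separate hypotheses genuinely line up the factors in the right direction — in particular that weak domination, which is stated in terms of the \emph{total} social utilities $u(\cdot)$, cooperates with the sorting convention on $u_A(\cdot)$ (resp. $u_B(\cdot)$) and with egoism. The boundedness $p_{i,j}\ge 0$ is the easily-overlooked ingredient: without it a larger multiplier could in principle reverse the inequality.
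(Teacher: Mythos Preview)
Your proof is correct and takes essentially the same approach as the paper: both use egoism, the sorting convention $u_A(A_1)\ge u_A(A_{i'})$, and weak domination (via the Remark, $p_{1,j}\ge p_{i',j}$) to show $a_{1,j}-a_{i',j}\ge 0$, then finish by the definition of $j^{\#}$. The only cosmetic difference is in the algebra: the paper first replaces $u_A(A_{i'})$ by $u_A(A_1)$ and factors the result as $(p_{1,j}-p_{i',j})(u_A(A_1)-u_A(B_j))\ge 0$, whereas you keep the exact identity $p_{1,j}(u_A(A_1)-u_A(B_j))-p_{i',j}(u_A(A_{i'})-u_A(B_j))$ and bound the two products separately; in particular your route needs the extra (trivial) observation $p_{i',j}\ge 0$, while the paper's factored form does not.
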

\begin{proof}
Suppose that strategy 1 weakly dominates each $i\in [n]\setminus \{1\}$ in party $A$ and $B$ chooses strategy~$j$. 
Clearly, we have $p_{1,j}\geq p_{i,j}$ for any $i\neq 1$. Since 
\begin{eqnarray*}
a_{1,j} - a_{i,j} &=& p_{1,j}u_A(A_1) + (1-p_{1,j})u_A(B_j)\\ &&- \left( p_{i,j}u_A(A_i) + (1-p_{i,j})u_A(B_j)\right)\\
&\geq& u_A(A_1)(p_{1,j}-p_{i,j}) + (p_{i,j} - p_{1,j})u_A(B_j) \\
&=& (p_{1,j} - p_{i,j})(u_A(A_1) - u_A(B_j))\\
&\geq& 0 
\end{eqnarray*}
in which the last inequality follows from the egoistic property, $A$ always wants to choose strategy~1. 
The best response of $B$ is then choose $j^{\#} = \arg\max_{j\in [m]} b_{1,j}$, 
and hence $(1,j^{\#})$ is a PNE. The other case can be analogously proved. 

\end{proof}

Namely, if strategy 1 in $A$ (weakly) dominates each $i\in [n]\setminus \{1\}$ and strategy 1 in $B$ (weakly) dominates each $j\in [m]$, then $(1,1)$ is a (weakly) dominant-strategy solution.

\begin{lemma}\label{lem:fractionals}
Let $r, s > 0$ be two positive real numbers. Then, for any integer $d>0$, $r/s > (r+d)/(s+d)$ if $r > s$ and $r/s < (r+d)/(s+d)$ if $r < s$. 
\end{lemma}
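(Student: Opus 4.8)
The plan is to reduce both claims to a single sign computation. First I would put the difference of the two fractions over a common denominator:
\[
\frac{r}{s} - \frac{r+d}{s+d} = \frac{r(s+d) - s(r+d)}{s(s+d)} = \frac{d(r-s)}{s(s+d)}.
\]
Since $r,s>0$ and $d>0$, the denominator $s(s+d)$ is strictly positive, so the sign of the whole expression is exactly the sign of $d(r-s)$, which is the sign of $r-s$. Hence if $r>s$ the difference is positive, giving $r/s > (r+d)/(s+d)$; and if $r<s$ the difference is negative, giving $r/s < (r+d)/(s+d)$. That settles both cases.

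There is essentially no genuine obstacle here; the only points needing a little care are (i) noting that $s(s+d)>0$, so that one never multiplies through or divides by a negative quantity and no inequality gets flipped, and (ii) observing that integrality of $d$ is not actually used — positivity of $d$ suffices. As an alternative one could invoke the mediant inequality: $\frac{r+d}{s+d}$ always lies strictly between $\frac{r}{s}$ and $\frac{d}{d}=1$, so it is squeezed between $1$ and $r/s$; then $r>s$ forces $r/s>1$ and hence $1<\frac{r+d}{s+d}<\frac{r}{s}$, while $r<s$ forces $r/s<1$ and hence $\frac{r}{s}<\frac{r+d}{s+d}<1$. I would present the first, computational argument, since it is the shortest and most self-contained, and it is exactly the form that will later be convenient when applying the lemma to the Bradley–Terry ratios $u(A_i)/(u(A_i)+u(B_j))$.
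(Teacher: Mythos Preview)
Your proof is correct and follows essentially the same route as the paper: compute $\frac{r}{s}-\frac{r+d}{s+d}=\frac{d(r-s)}{s(s+d)}$ and read off the sign from $r-s$ since the denominator is positive. Your remark that integrality of $d$ is never used is also accurate.
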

\begin{proof}
Let $r, s > 0$ be two positive integers. Then $r/s - (r+d)/(s+d) = d(r-s)/(s^2+sd)$. So $r/s - (r+d)/(s+d) > 0$ if $r > s$ and $r/s - (r+d)/(s+d) < 0$ if $r < s$. 
\end{proof}

\subsection{No Existence Guarantee of PNE in the Bradley-Terry Model}
\label{subsec:BT}

We claim that the two party election game in the Bradley-Terry model, 
a PNE does not always exist. 
As the upper instance illustrated in Table~\ref{tab:NoPNE_BT}, $m = n = 2$, $u_A(A_1) = 91$, 
$u_B(A_1) = 0$, $u_A(A_2) = 90$, $u_B(A_2) = 8$, $u_B(B_1) = 11$, $u_A(B_1) = 1$, $u_B(B_2) = 10$,
$u_A(B_2) = 20$. We have $p_{1,1} = 91/(91+12)\approx 0.88$, 
$p_{1,2} = 91/(91+30)\approx 0.76$, $p_{2,1} = 98/(98+12)\approx 0.89$, and 
$p_{2,2} = 98/(98+30)\approx 0.77$. 
Hence, we obtain the first matrix in Table~\ref{tab:NoPNE_BT}. Furthermore, as the example is egoistic, it also implies that the game in the Bradley-Terry model may not have a PNE even with egoism guarantee. The second instance in Table~\ref{tab:NoPNE_BT} gives a non-egoistic example of no PNE in the Bradley-Terry model. 

\begin{table}[ht]
\begin{center}
\begin{tabular}[c]{ l l | l l }
	\multicolumn{4}{ c }{}\\
	$A$ & \multicolumn{1}{c}{}& $B$ & \\
	\hline
	$u_A(A_i)$ & $u_B(A_i)$ & $u_B(B_j)$ & $u_A(B_j)$\\
	\hline
	91  &  0  &  11  &  1\\
	90  &  8  &  10  &  20\\
	\hline
\end{tabular} \;\;\;\;
\begin{tabular}[c]{ l l | l l }
	\multicolumn{4}{ c }{}\\
	$A$ & \multicolumn{1}{c}{}& $B$ & \\
	\hline
	$u_A(A_i)$ & $u_B(A_i)$ & $u_B(B_j)$ & $u_A(B_j)$\\
	\hline
	44  &  10  &  37  &  17\\
	39  &  55  &  10  &  5\\
	\hline
\end{tabular}
\vspace{7pt}\\
\begin{tabular}[c]{ l | l }
	\centering
	$a_{1,1}$, $b_{1,1}$  &  $a_{1,2}$, $b_{1,2}$\\
	\hline
	$a_{2,1}$, $b_{2,1}$  &  $a_{2,2}$, $b_{2,2}$\\
\end{tabular}
$\approx$
\begin{tabular}[c]{  l | l  }
	\centering
	80.51, 1.28  &  73.84, 2.17\\
	\hline
	80.29, 8.32  &  74.02, 8.23\\
\end{tabular}
,\;\;
\begin{tabular}[c]{  l | l  }
	\centering
	30.50, 23.50  &  35.52, 10.00\\
	\hline
	30.97, 48.43  &  34.32, 48.81\\
\end{tabular}
\vspace{6pt}
\caption{Two examples of No PNE in the Bradley-Terry model ($m = n = 2, b=100$). Left one is egoistic while the right one is not.}
\label{tab:NoPNE_BT}
\end{center}
\end{table}

\section{Equilibrium Analysis}
\label{sec:m2n2_NE}
We start with the case of two candidates per party, and with the help of Lemma~\ref{lem:dominatingNE} prove existence of PNE in both the linear link model and the softmax model, also using critical lemmas dealing with the complemented condition in Lemma~\ref{lem:dominatingNE}. We then reuse these critical lemmas to show existence of PNE for the case of more than two candidates per party.

\subsection{Two Candidates per Party}

There are exactly two scenarios for the two-party election game with $m = n = 2$ to have no pure Nash equilibrium, as illustrated in Fig.~\ref{fig:2by2_noPNE}. The arrows show the deviations. 
For example, $(D_1)$ stands for $A$'s unilateral deviation from strategy~1 to~2 given $B$ staying at 
strategy~1 and $(D_2)$ stands for $B$'s unilateral deviation from strategy~1 to~2 given $A$ staying at 
strategy~2, while $(D_3)$ stands for $A$'s unilateral deviation from strategy~2 to strategy~1 given $B$ 
staying at strategy~2 and $(D_4)$ stands for $B$'s unilateral deviation from strategy~2 to strategy~1 given $A$ 
staying at strategy~1. 

That is, if the game has no PNE, then at state $(1,1)$, either $A$ or $B$ deviates from strategy~1 to~2. For the former case, since no PNE exists, $B$ wants to deviate unilaterally from 
strategy~1 to~2 at state $(2,1)$, then the game reaches state $(1,2)$, in which $A$ wants to deviate 
unilaterally from strategy~2 to~1. Finally, the entries in the payoff matrix as well as the deviation 
arrows form a cycle as the left scenario of Fig.~\ref{fig:2by2_noPNE} shows. Likewise, the latter 
case corresponds to the right scenario of the Fig.~\ref{fig:2by2_noPNE}.

\begin{figure}[ht]
	\begin{center}
		\includegraphics[scale=0.35]{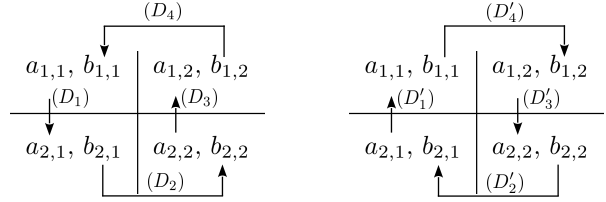}
	\end{center}
	\caption{The two scenarios of having no PNE for $m = n = 2$.}
	\label{fig:2by2_noPNE}
\end{figure}

Let $\Delta(D_i)$ (resp., $\Delta(D'_i)$) denote the gain of payoff by the unilateral deviation $D_i$ 
(resp., $D'_i$) for $i\in \{1,2,3,4\}$. Then, we have 
\begin{eqnarray*}
\Delta(D_1) = -\Delta(D'_1) &=& a_{2,1}-a_{1,1}\\
&=&p_{2,1}u_A(A_2)+(1-p_{2,1})u_A(B_1)\\
&&-(p_{1,1}u_A(A_1)+(1-p_{1,1})u_A(B_1))\\
&=& -p_{1,1}(u_A(A_1)-u_A(A_2))\\ 
&&+ (p_{2,1}-p_{1,1})(u_A(A_2)-u_A(B_1)).\\
\Delta(D_2) = -\Delta(D'_2) &=& b_{2,2}-b_{2,1}\\
&=&(1-p_{2,2})u_B(B_2)+p_{2,2}u_B(A_2)\\ 
&&- ((1-p_{2,1})u_B(B_1)+p_{2,1}u_B(A_2))\\
&=& -(1-p_{2,1})(u_B(B_1)-u_B(B_2))\\
&&+(p_{2,1}-p_{2,2})(u_B(B_2) - u_B(A_2)).
\end{eqnarray*}
Similarly, we derive 
\begin{eqnarray*}
\Delta(D_3) = -\Delta(D'_3) &=& a_{1,2}-a_{2,2}\\ &=&p_{1,2}u_A(A_1)+(1-p_{1,2})u_A(B_2)\\
&&-(p_{2,2}u_A(A_2)+(1-p_{2,2})u_A(B_2))\\
&=& p_{1,2}(u_A(A_1)-u_A(A_2))\\ 
&&+ (p_{1,2}-p_{2,2})(u_A(A_2)-u_A(B_2)).\\
\Delta(D_4) = -\Delta(D'_4) &=& b_{1,1}-b_{1,2}\\ &=&(1-p_{1,1})u_B(B_1)+p_{1,1}u_B(A_1)\\ 
&&- ((1-p_{1,2})u_B(B_2)+p_{1,2}u_B(A_1))\\
&=& (1-p_{1,1})(u_B(B_1)-u_B(B_2))\\
&&+(p_{1,2}-p_{1,1})(u_B(B_2) - u_B(A_1)).
\end{eqnarray*}

In order to check if the two-party election game with $m=n=2$ always has a PNE, 
Lemma~\ref{lem:dominatingNE} suggests us to focus on the case that $u(A_2) > u(A_1)$ and 
$u(B_2) > u(B_1)$ since when either $u(A_1) \geq u(A_2)$ or $u(B_1) \geq u(B_2)$ a PNE always exists.

\subsubsection{Linear Link Model}
\label{subsec:PNE_LL}

In this subsection, we study the linear link model. We show that the two-party election game always has a PNE in this model. 
As previously discussed in this section, we assume that $u(A_2) > u(A_1)$ and 
$u(B_2) > u(B_1)$. The following lemma tells us at least one of the deviations $D_2$ 
and~$D_4$ fails to happen, and analogously, at least one of deviations $D'_1$ and $D'_3$ fails to happen.

\begin{lemma}\label{lem:devConf_LL}
Consider the two-party election game in the linear link model. Suppose that $u(A_2) > u(A_1)$, then $\Delta(D_4) < 0$ (resp., $\Delta(D_2) < 0$) if $\Delta(D_2) > 0$ (resp., $\Delta(D_4) > 0$). Likewise, suppose that $u(B_2) > u(B_1)$, then $\Delta(D'_3) < 0$ (resp., $\Delta(D'_1) < 0$) if $\Delta(D'_1) > 0$ (resp., $\Delta(D'_3) > 0$). 
\end{lemma}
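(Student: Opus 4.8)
The plan is to reduce the whole statement to a single inequality: in the linear link model, exhibit a strictly positive weight $\lambda$ with $\lambda\,\Delta(D_2)+\Delta(D_4)\le 0$. This one inequality gives both halves at once, since then $\Delta(D_2)>0$ forces $\Delta(D_4)\le-\lambda\,\Delta(D_2)<0$, while $\Delta(D_4)>0$ forces $\lambda\,\Delta(D_2)\le-\Delta(D_4)<0$, i.e.\ $\Delta(D_2)<0$. The statement for $\Delta(D'_1),\Delta(D'_3)$ under $u(B_2)>u(B_1)$ then follows by running exactly the same argument with the roles of $A$ and $B$ (and of their respective sorting conventions) interchanged.

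For the set-up I would record that $p_{i,j}=\tfrac12+\bigl(u(A_i)-u(B_j)\bigr)/(2b)$, so $p_{2,1}-p_{1,1}=\bigl(u(A_2)-u(A_1)\bigr)/(2b)>0$ and $p_{2,1}-p_{2,2}=p_{1,1}-p_{1,2}=\bigl(u(B_2)-u(B_1)\bigr)/(2b)=:\gamma$. Substituting into the expressions for $\Delta(D_2)$ and $\Delta(D_4)$ already derived above puts them in the compact form
\[
\Delta(D_2)=-(1-p_{2,1})\,\alpha+\gamma\,\beta_2,\qquad \Delta(D_4)=(1-p_{1,1})\,\alpha-\gamma\,\beta_1,
\]
where $\alpha:=u_B(B_1)-u_B(B_2)\ge 0$ (candidates of $B$ are sorted by $u_B$) and $\beta_1:=u_B(B_2)-u_B(A_1)>0$, $\beta_2:=u_B(B_2)-u_B(A_2)>0$ (both positive by egoism). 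If $\gamma\le 0$ we are done at once: then $\Delta(D_2)\le 0$, and if in fact $\Delta(D_2)=0$ then $\alpha=0$ and, since $\beta_2>0$, also $\gamma=0$, which makes $\Delta(D_4)=0$; hence $\Delta(D_2)<0$ whenever $\Delta(D_4)>0$ and the other implication is vacuous.

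The substantive case is $\gamma>0$, i.e.\ $u(B_2)>u(B_1)$. Here egoism gives $u_B(B_1)>0$, hence $u(B_1)>0$, hence (using $u(A_2)\le b$) $p_{2,1}<1$, so $\lambda:=(1-p_{1,1})/(1-p_{2,1})>0$ is well defined and satisfies $\lambda(1-p_{2,1})=1-p_{1,1}$. A one-line computation then gives $\lambda\,\Delta(D_2)+\Delta(D_4)=\gamma\bigl(\lambda\beta_2-\beta_1\bigr)$, so it suffices to prove $\lambda\beta_2\le\beta_1$, equivalently $\beta_2(1-p_{1,1})\le\beta_1(1-p_{2,1})$. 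This is the crux and the main obstacle, because $\beta_1>\beta_2$ pulls against $1-p_{1,1}>1-p_{2,1}$, so a qualitative comparison is not enough — one needs the quantitative bound supplied by the sorting of $A$'s candidates. Concretely, $\beta_1-\beta_2=u_B(A_2)-u_B(A_1)$ and $u_A(A_1)\ge u_A(A_2)$ give $\beta_1-\beta_2\ge u(A_2)-u(A_1)>0$. Using $2b(1-p_{2,1})=b-u(A_2)+u(B_1)$ and $2b(1-p_{1,1})=b-u(A_1)+u(B_1)$, a short expansion yields
\[
2b\bigl(\beta_1(1-p_{2,1})-\beta_2(1-p_{1,1})\bigr)=(\beta_1-\beta_2)\bigl(b+u(B_1)-u(A_2)\bigr)-\beta_2\bigl(u(A_2)-u(A_1)\bigr),
\]
and since $b+u(B_1)-u(A_2)\ge 0$ and $\beta_1-\beta_2\ge u(A_2)-u(A_1)>0$ the right-hand side is at least $\bigl(u(A_2)-u(A_1)\bigr)\bigl(b+u(B_1)-u(A_2)-\beta_2\bigr)$; finally $b+u(B_1)-u(A_2)-\beta_2=(b-u_A(A_2))+u_A(B_1)+(u_B(B_1)-u_B(B_2))\ge 0$ by $u_A(A_2)\le b$, $u_B(B_1)\ge u_B(B_2)$ and nonnegativity of the utilities. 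Hence $\beta_2(1-p_{1,1})\le\beta_1(1-p_{2,1})$, so $\lambda\beta_2-\beta_1\le 0$ and $\lambda\,\Delta(D_2)+\Delta(D_4)\le 0$, completing the case $\gamma>0$ and with it the lemma.
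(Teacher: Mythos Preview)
Your proof is correct and, at its core, reduces to the same key inequality as the paper: $\beta_2(1-p_{1,1})\le\beta_1(1-p_{2,1})$, i.e.\ $(u_B(B_2)-u_B(A_2))(1-p_{1,1})\le(u_B(B_2)-u_B(A_1))(1-p_{2,1})$. The packaging, however, differs. The paper argues by contradiction, assumes $\Delta(D_2)>0$ together with $\Delta(D_4)\ge0$, divides each inequality by the corresponding $1-p$ factor, and then compares the two right-hand sides by forming their ratio and applying the fraction lemma (Lemma~\ref{lem:fractionals}) twice; it then remarks that the other implication is handled analogously. Your weighted-sum identity $\lambda\,\Delta(D_2)+\Delta(D_4)=\gamma(\lambda\beta_2-\beta_1)$ with $\lambda=(1-p_{1,1})/(1-p_{2,1})$ kills the $\alpha$-terms in one stroke and yields both implications at once, and your direct algebraic expansion of $2b\bigl(\beta_1(1-p_{2,1})-\beta_2(1-p_{1,1})\bigr)$ bypasses Lemma~\ref{lem:fractionals} entirely while using exactly the same ingredients (the sorting bound $\beta_1-\beta_2\ge u(A_2)-u(A_1)$ and the nonnegativity of $(b-u_A(A_2))+u_A(B_1)+(u_B(B_1)-u_B(B_2))$). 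One minor remark: in your $\gamma\le0$ branch you infer $\alpha=0$ from $(1-p_{2,1})\alpha=0$, which tacitly uses $p_{2,1}<1$; you do justify this via egoism and $u(B_1)>0$, but only in the subsequent paragraph.
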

\begin{proof}
	Suppose we have $u(A_2) > u(A_1)$. To ease the notation, let 
	$\hat{p} = 1-p_{2,1}$, $p' = 1-p_{1,1}$, and $\delta = p_{2,1} - p_{2,2} = p_{1,1} - p_{1,2} = (u(B_2)-u(B_1))/2b$. 
	To prove the lemma by contradiction, we assume and focus that $\Delta(D_2)> 0$ yet $\Delta(D_4)\geq 0$, as the other case that $\Delta(D_4)> 0$ yet $\Delta(D_2)\geq 0$ can be proved in the same way. First, we have  
	\[
	\left\{
	\begin{array}{ccc}
	p'(u_B(B_1)-u_B(B_2)) &\geq& \delta\cdot (u_B(B_2)-u_B(A_1)),\\
	\vspace{-5pt}\\
	\hat{p}(u_B(B_1)-u_B(B_2)) &<& \delta\cdot (u_B(B_2)-u_B(A_2)).
	\end{array}
	\right.
	\]
	We claim that $\hat{p}, p'>0$. Recall that $p_{1,1} \geq p_{2,1}$ which implies that $\hat{p}\geq p'$, then $p'= 0$ must be true if one of $\hat{p}$ and $p'$ is~$0$. Substituting $p'=0$ in the first inequality above we have $0\geq \delta\cdot (u_B(B_2)-u_B(A_1))$, which contradicts that  $u_B(B_2) > u_B(A_1)$ by egoism assumption. Dividing the inequalities by $p'$ and $\hat{p}$ respectively, we have 
	\[
	\left\{
	\begin{array}{ccc}
	u_B(B_1)-u_B(B_2) &\geq& \delta\cdot (u_B(B_2)-u_B(A_1))/p',\\
	\vspace{-5pt}\\
	u_B(B_1)-u_B(B_2) &<& \delta\cdot (u_B(B_2)-u_B(A_2))/\hat{p}.
	\end{array}
	\right.
	\]
	Note that $u_B(B_2)> u_B(A_2)$ and $u_B(B_2)> u_B(A_1)$ due to egoism, so $\delta\leq 0$ will make the second inequality fail to hold. Hence we proceed with $\delta > 0$ and compare the right-hand sides of the above inequalities: 
	\begin{eqnarray*}
	&&\frac{(u_B(B_2)-u_B(A_1))/p'}{(u_B(B_2)-u_B(A_2))/\hat{p}}\\ 
	&=& 
	\frac{u_B(B_2)-u_B(A_1)}{u_B(B_2)-u_B(A_2)}\cdot \frac{1+(u(B_1)-u(A_2))/b}{1+(u(B_1)-u(A_1))/b}\\
	&=& \frac{u_B(B_2)-u_B(A_1)}{u_B(B_2)-u_B(A_2)}\cdot \frac{b+(u(B_1)-u(A_2))}{b+(u(B_1)-u(A_1))}. 
	\end{eqnarray*}
Note that $u_B(A_2) > u_B(A_1)$ since $u(A_2) > u(A_1)$ and $u_A(A_1)\geq u_A(A_2)$. Then we have 
	\begin{eqnarray*}
	&&\frac{u_B(B_2)-u_B(A_1)}{u_B(B_2)-u_B(A_2)}\\ 
	&\geq & 
	\frac{u_B(B_2)-u_B(A_1)+(u(B_1)-u_B(B_2))}{u_B(B_2)-u_B(A_2)+(u(B_1)-u_B(B_2))}\quad \mbox{(by Lemma~\ref{lem:fractionals})}\\
	&= & \frac{u(B_1)-u_B(A_1)}{u(B_1)-u_B(A_2)}\\
	&\geq& \frac{u(B_1)-u_B(A_1)+(b-u_A(A_1))}{u(B_1)-u_B(A_2)+(b-u_A(A_2))}\quad \mbox{(by Lemma~\ref{lem:fractionals})} \\
	& =& \frac{u(B_1)-u(A_1)+b}{u(B_1)-u(A_2)+b},  \\
	\end{eqnarray*} 
Finally, we derive that 
\begin{eqnarray*}
&&\frac{(u_B(B_2)-u_B(A_1))/p'}{(u_B(B_2)-u_B(A_2))/\hat{p}}\\
&\geq& 
\frac{u(B_1)-u(A_1)+b}{u(B_1)-u(A_2)+b}\cdot  
\frac{b+(u(B_1)-u(A_2))}{b+(u(B_1)-u(A_1))}= 1,
\end{eqnarray*}
which implies that $u_B(B_1)-u_B(B_2) > u_B(B_1)-u_B(B_2)$, hence a contradiction occurs. For the second part of the lemma that $u(B_2) > u(B_1)$, the proof can be similarly derived. 
	
\end{proof}

Theorem~\ref{thm:PNE_LL} holds by Lemma~\ref{lem:dominatingNE} and~\ref{lem:devConf_LL}.

\begin{theorem}\label{thm:PNE_LL}
In the linear link model with $m = n = 2$, the two-party election game always has a PNE.  
\end{theorem}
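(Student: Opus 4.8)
The plan is to deduce Theorem~\ref{thm:PNE_LL} directly from the two lemmas already in hand, so the proof is short: essentially a case analysis driven by Lemma~\ref{lem:dominatingNE} together with Lemma~\ref{lem:devConf_LL}. First I would invoke Lemma~\ref{lem:dominatingNE}: if strategy~$1$ weakly dominates strategy~$2$ in party~$A$ (i.e.\ $u(A_1)\ge u(A_2)$), then a PNE of the form $(1,j^{\#})$ exists; symmetrically, if $u(B_1)\ge u(B_2)$ a PNE of the form $(i^{\#},1)$ exists. So the only case that needs work is $u(A_2)>u(A_1)$ \emph{and} $u(B_2)>u(B_1)$, which is exactly the regime the critical Lemma~\ref{lem:devConf_LL} is stated for.

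In that remaining regime I would argue by contradiction: suppose no PNE exists. By the structural observation preceding Fig.~\ref{fig:2by2_noPNE}, the absence of a PNE for $m=n=2$ forces one of the two $4$-cycles of unilateral improving deviations. In the ``left'' scenario the cycle consists of $D_1, D_2, D_3, D_4$ all being strictly improving, i.e.\ $\Delta(D_1)>0$, $\Delta(D_2)>0$, $\Delta(D_3)>0$, $\Delta(D_4)>0$ (reading the arrows in Fig.~\ref{fig:2by2_noPNE}); the ``right'' scenario is the mirror cycle $D'_1, D'_2, D'_3, D'_4$, i.e.\ $\Delta(D'_1)>0,\dots,\Delta(D'_4)>0$. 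But Lemma~\ref{lem:devConf_LL} says precisely that, under $u(A_2)>u(A_1)$, one cannot have both $\Delta(D_2)>0$ and $\Delta(D_4)>0$; and under $u(B_2)>u(B_1)$, one cannot have both $\Delta(D'_1)>0$ and $\Delta(D'_3)>0$. Each scenario therefore contradicts one half of Lemma~\ref{lem:devConf_LL}. Hence some state is a PNE.

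The only genuinely delicate point — the ``main obstacle'' — is bookkeeping: making sure the deviation cycles of Fig.~\ref{fig:2by2_noPNE} really do contain the pair $\{D_2,D_4\}$ (resp.\ $\{D'_1,D'_3\}$) with the correct strict-inequality signs, so that Lemma~\ref{lem:devConf_LL} applies verbatim. This amounts to checking that $D_2$ and $D_4$ are both forward arrows in the left cycle (an $A$-then-$B$-then-$A$-then-$B$ loop through states $(1,1)\to(2,1)\to(2,2)\to(1,2)\to(1,1)$, so the $B$-moves are $D_2$ at $(2,1)$ and $D_4^{}$'s reverse... ) — i.e.\ reconciling the arrow labels $(D_i)$ with the $\Delta(D_i)$ signs exactly as defined in the displayed equations $\Delta(D_2)=b_{2,2}-b_{2,1}$ and $\Delta(D_4)=b_{1,1}-b_{1,2}$. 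Everything else is immediate. A clean write-up is:

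\begin{proof}
If $u(A_1)\ge u(A_2)$, then strategy~$1$ weakly dominates strategy~$2$ in party~$A$, so by Lemma~\ref{lem:dominatingNE} the state $(1,j^{\#})$ with $j^{\#}=\arg\max_{j}b_{1,j}$ is a PNE. Symmetrically, if $u(B_1)\ge u(B_2)$, Lemma~\ref{lem:dominatingNE} gives a PNE $(i^{\#},1)$. It remains to treat the case $u(A_2)>u(A_1)$ and $u(B_2)>u(B_1)$.

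Assume, for contradiction, that the game has no PNE. By the analysis preceding Fig.~\ref{fig:2by2_noPNE}, the improving-deviation arrows must then form one of the two cycles depicted there. In the left cycle all of $\Delta(D_1),\Delta(D_2),\Delta(D_3),\Delta(D_4)$ are positive; in particular $\Delta(D_2)>0$ and $\Delta(D_4)>0$, contradicting Lemma~\ref{lem:devConf_LL} (first part), which applies since $u(A_2)>u(A_1)$. In the right cycle all of $\Delta(D'_1),\Delta(D'_2),\Delta(D'_3),\Delta(D'_4)$ are positive; in particular $\Delta(D'_1)>0$ and $\Delta(D'_3)>0$, contradicting Lemma~\ref{lem:devConf_LL} (second part), which applies since $u(B_2)>u(B_1)$. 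Either way we reach a contradiction, so a PNE exists.
\end{proof}
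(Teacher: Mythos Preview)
Your proof is correct and follows exactly the approach the paper takes: the paper's entire proof of Theorem~\ref{thm:PNE_LL} is the one-line statement that it ``holds by Lemma~\ref{lem:dominatingNE} and~\ref{lem:devConf_LL},'' and you have simply unpacked that sentence into the explicit case split (either some strategy~1 weakly dominates, or both $u(A_2)>u(A_1)$ and $u(B_2)>u(B_1)$, in which case each no-PNE cycle of Fig.~\ref{fig:2by2_noPNE} is ruled out by one half of Lemma~\ref{lem:devConf_LL}). Your bookkeeping on which $\Delta(D_i)$'s are positive in each cycle is accurate.
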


\subsubsection{Softmax Model}
\label{subsec:PNE_SM}

In this subsection, we show that the two-party election game always has a PNE of in the softmax model. 

\begin{lemma}\label{lem:devConf_SM}
Consider the two-party election game in the softmax model. Suppose that $u(A_2)> u(A_1)$, then $\Delta(D_4) < 0$ (resp., $\Delta(D_2) < 0$) if $\Delta(D_2)>0$ (resp., $\Delta(D_4) > 0$). Likewise, suppose that $u(B_2)> u(B_1)$, then $\Delta(D'_3) < 0$ (resp., $\Delta(D'_1) < 0$) if $\Delta(D'_1) > 0$ (resp., $\Delta(D'_3) > 0$).  
\end{lemma}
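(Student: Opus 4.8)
The plan is to follow the template of the proof of Lemma~\ref{lem:devConf_LL}. Assume $u(A_2)>u(A_1)$ and argue by contradiction; as there, it is enough to treat the case $\Delta(D_2)>0$ and $\Delta(D_4)\ge 0$, the case $\Delta(D_4)>0$, $\Delta(D_2)\ge 0$ being handled in the same way, and likewise the second (primed) half of the statement follows by swapping the roles of $A$ and $B$. Plugging the already computed expressions for $\Delta(D_2)$ and $\Delta(D_4)$ into these hypotheses gives
\[
(p_{2,1}-p_{2,2})\bigl(u_B(B_2)-u_B(A_2)\bigr)>(1-p_{2,1})\bigl(u_B(B_1)-u_B(B_2)\bigr),
\]
\[
(1-p_{1,1})\bigl(u_B(B_1)-u_B(B_2)\bigr)\ge(p_{1,1}-p_{1,2})\bigl(u_B(B_2)-u_B(A_1)\bigr).
\]
In the softmax model $1-p_{2,1}$ and $1-p_{1,1}$ are automatically positive, and since $u_B(B_1)\ge u_B(B_2)$ (the candidates of $B$ are sorted by $u_B$) the right-hand side of the first inequality is nonnegative, while $u_B(B_2)-u_B(A_2)>0$ by egoism; hence the first inequality forces $p_{2,1}>p_{2,2}$, i.e.\ $u(B_2)>u(B_1)$. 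Dividing the two inequalities by $1-p_{2,1}$ and $1-p_{1,1}$ sandwiches $u_B(B_1)-u_B(B_2)$ between $\frac{p_{1,1}-p_{1,2}}{1-p_{1,1}}\bigl(u_B(B_2)-u_B(A_1)\bigr)$ and $\frac{p_{2,1}-p_{2,2}}{1-p_{2,1}}\bigl(u_B(B_2)-u_B(A_2)\bigr)$, so it suffices to derive a contradiction by showing the former is at least the latter.

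Write $\alpha_i:=e^{u(A_i)/b}$ and $\beta_j:=e^{u(B_j)/b}$, so $p_{i,j}=\alpha_i/(\alpha_i+\beta_j)$. A direct computation gives, for $i=1,2$,
\[
\frac{p_{i,1}-p_{i,2}}{1-p_{i,1}}=\frac{\alpha_i(\beta_2-\beta_1)}{\beta_1(\alpha_i+\beta_2)},
\]
so the common factor $(\beta_2-\beta_1)/\beta_1>0$ cancels, and the inequality to be proved becomes $p_{1,2}\bigl(u_B(B_2)-u_B(A_1)\bigr)\ge p_{2,2}\bigl(u_B(B_2)-u_B(A_2)\bigr)$. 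Since $b_{i,2}=u_B(B_2)-p_{i,2}\bigl(u_B(B_2)-u_B(A_i)\bigr)$, this is precisely the statement $b_{2,2}\ge b_{1,2}$ (intuitively: once $A_2$ is socially better than $A_1$, party $B$ playing $B_2$ is no worse off facing $A_2$ than facing $A_1$). This is where the argument departs from Lemma~\ref{lem:devConf_LL}: in the softmax model $p_{2,1}-p_{2,2}\neq p_{1,1}-p_{1,2}$, so there is no single shared increment $\delta$; the multiplicative form of $p_{i,j}$ is exactly what collapses the two ratios into a common factor times $\alpha_i/(\alpha_i+\beta_2)$ and makes this clean reduction possible.

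It remains to prove $b_{2,2}\ge b_{1,2}$ under the standing hypotheses. Clearing the positive denominators and rearranging, $b_{2,2}-b_{1,2}$ has the sign of
\[
\alpha_1\bigl(u_B(A_2)-u_B(A_1)\bigr)(\alpha_2+\beta_2)-\beta_2(\alpha_2-\alpha_1)\bigl(u_B(B_2)-u_B(A_2)\bigr).
\]
Two monotonicity facts follow from the sortings and from $u(\cdot)\in[0,b]$: first, $u_B(A_2)-u_B(A_1)\ge u(A_2)-u(A_1)=:\Delta$, because $u_A(A_1)\ge u_A(A_2)$; second, $u_B(B_2)-u_B(A_2)\le b-\Delta$, because $u_B(A_2)=u(A_2)-u_A(A_2)\ge u(A_2)-u(A_1)=\Delta$ while $u_B(B_2)\le u(B_2)\le b$. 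Hence it suffices to show $\alpha_1\Delta(\alpha_2+\beta_2)\ge\beta_2(\alpha_2-\alpha_1)(b-\Delta)$. Putting $L:=\ln(\alpha_2/\alpha_1)=\Delta/b\in(0,1]$, so $\alpha_2=\alpha_1 e^L$ and $b-\Delta=b(1-L)$, and dividing by $\alpha_1 b>0$, this reduces to $L(\alpha_1 e^L+\beta_2)\ge\beta_2(e^L-1)(1-L)$, equivalently $L\alpha_1 e^L\ge\beta_2\bigl(e^L(1-L)-1\bigr)$. The left side is positive while the right side is negative: the function $g(L):=e^L(1-L)-1$ satisfies $g(0)=0$ and $g'(L)=-Le^L<0$ for $L>0$, so $g(L)<0$. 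This gives $b_{2,2}>b_{1,2}$, the required contradiction.

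The routine parts — the $\ge 0$ versus $>0$ bookkeeping in the hypotheses, and the mirror argument for $\Delta(D'_1),\Delta(D'_3)$ under $u(B_2)>u(B_1)$ — are as in Lemma~\ref{lem:devConf_LL}. The main obstacle is the absence of the common increment $\delta$ available in the linear link model; the resolution is to exploit the product form of the softmax probabilities to reduce the target to $b_{2,2}\ge b_{1,2}$, after which the genuinely model-specific estimate $e^L(1-L)\le 1$ for $L\ge 0$ finishes the proof.
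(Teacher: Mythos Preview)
Your argument is correct. The overall scaffold matches the paper's: assume for contradiction both deviation conditions hold, sandwich $u_B(B_1)-u_B(B_2)$ between the two resulting expressions, and show the lower bound is at least the upper bound using softmax-specific estimates. Where you depart is in how you compare the two bounds. The paper keeps the full ratio $(*)/(**)$ of the two right-hand sides, lower-bounds it by $e^{-\rho/b}\cdot\frac{c+\rho}{c}$ with $\rho=u(B_2)-u(B_1)$ and $c=u_A(A_2)-u_A(B_2)$, applies $e^{-\rho/b}\ge 1-\rho/b$, and checks $(1-\rho/b)(1+\rho/c)\ge 1$ via $b-c-\rho\ge 0$. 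You instead exploit the product form of $p_{i,j}$ to factor out the common $(\beta_2-\beta_1)/\beta_1$, reducing the comparison to the clean statement $b_{2,2}\ge b_{1,2}$, and finish with the single-variable inequality $e^L(1-L)\le 1$ for $L=\Delta/b\in(0,1]$. Both routes ultimately lean on the same ingredients (the sorting $u_A(A_1)\ge u_A(A_2)$, the range $u(\cdot)\in[0,b]$, and a first-order property of the exponential), but your factorisation and the interpretation as $b_{2,2}\ge b_{1,2}$ are tidier and give a more conceptual reading of why the contradiction occurs.
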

\begin{proof}
    Suppose that $u(B_2)> u(B_1)$. To ease the notation, let 
	\begin{eqnarray*}
	q &=& p_{1,1} = \frac{e^{u(A_1)/b}}{e^{u(A_1)/b} + e^{u(B_1)/b}},\; 
	 q' = p_{2,1} = \frac{e^{u(A_2)/b}}{e^{u(A_2)/b} + e^{u(B_1)/b}},   \;\\ 
	 \delta &=& q' - q.\\
	\hat{q} &=& p_{1,2} = \frac{e^{u(A_1)/b}}{e^{u(A_1)/b} + e^{u(B_2)/b}},\; 
	 \hat{q}'= p_{2,2} = \frac{e^{u(A_2)/b}}{e^{u(A_2)/b} + e^{u(B_2)/b}}, 
	 \;\\ 
	 \delta' &=& \hat{q}' - \hat{q}.
	\end{eqnarray*} 
	To prove the lemma by contradiction, similar to the argument in the proof of Lemma~\ref{lem:devConf_LL}, we focus on the case that $\Delta(D'_3) > 0$ yet $\Delta(D'_1)\geq 0$. The proof for the other case that $\Delta(D'_1)> 0$ yet $\Delta(D'_3)\geq 0$ is basically the same.  
	
	By definition, $q,q',\hat{q},\hat{q'} > 0$. 
	First, as $q,\hat{q}> 0$, $\Delta(D'_3) > 0$ and  $\Delta(D'_1)\geq 0$ implies that   
	\begin{eqnarray*}
		u_A(A_1)-u_A(A_2) &\geq& \delta\cdot \frac{u_A(A_2)-u_A(B_1)}{q},\mbox{ and }\\
		u_A(A_1)-u_A(A_2) &<& \delta'\cdot \frac{u_A(A_2)-u_A(B_2)}{\hat{q}}.
	\end{eqnarray*}
	By the same argument in the proof of Lemma~\ref{lem:devConf_LL}, we focus on $\delta > 0$. Then we have 
		\[
		\left\{
		\begin{array}{llll}
		&&u_A(A_1)-u_A(A_2)\\ 
		&\geq& 
		\frac{e^{u(B_1)/b}(e^{u(A_2)/b} - e^{u(A_1)/b})}{e^{(u(A_1)+u(A_2))/b} + e^{(u(A_1)+u(B_1))/b}}\cdot (u_A(A_2)-u_A(B_1)) 
		& \triangleq \notag{(*)} \\
		\vspace{-7pt}\\
		&&u_A(A_1)-u_A(A_2)\\ 
		&<& \frac{e^{u(B_2)/b}(e^{u(A_2)/b} - e^{u(A_1)/b})}{e^{(u(A_1)+u(A_2))/b}+e^{(u(A_1)+u(B_2))/b}}\cdot (u_A(A_2)-u_A(B_2)) 
		& \triangleq \notag{(**)} 
		\end{array} 
		\right.
		\]
	Then, we compare the right-hand sides of the above two 
	inequalities as below.  
	\begin{eqnarray*}
		\frac{(*)}{(**)} &=& \frac{e^{u(B_1)/b}}{e^{u(B_2)/b}}\cdot 
		\frac{e^{u(A_2)/b}+e^{u(B_2)/b}}{e^{u(A_2)/b}+e^{u(B_1)/b}}\cdot 
		\frac{u_A(A_2)-u_A(B_1)}{u_A(A_2)-u_A(B_2)} \\
		&\geq& e^{(u(B_1)-u(B_2))/b}\cdot \frac{u_A(A_2)-u_A(B_1)}{u_A(A_2)-u_A(B_2)}. 
	\end{eqnarray*}
	Let $c = u_A(A_2) - u_A(B_2)$ and $\rho = u(B_2)-u(B_1) >0$. Then, 
	$u_A(A_2) - u_A(B_1) = c+u_A(B_2)-u_A(B_1)\geq c+(u(B_2)-u(B_1)) = c+\rho$. 
	Note that $e^{(u(B_1)-u(B_2))/b}\geq 
	1 + (u(B_1)-u(B_2))/b = 1 - \rho/b$. 	
	Also, $(u_A(A_2)-u_A(B_1))/(u_A(A_2)-u_A(B_2))\geq (c+\rho)/c = 1+\rho/c$. 
	Therefore, 
	\begin{eqnarray*}
		\frac{(*)}{(**)} &\geq & 
		\left(1-\frac{\rho}{b}\right)\cdot \left(1+\frac{\rho}{c}\right)  \\
		& = & 1 + \frac{\rho}{c} - \frac{\rho}{b} -\frac{\rho^2}{bc}\\
		& = & 1 + \rho\cdot \frac{b-c - \rho}{bc} \\
		&\geq& 1.  
	\end{eqnarray*} 
	where the last inequality follows because 
	$b-c-\rho = b - u_A(A_2) + u_A(B_2) - u_B(B_2) - u_A(B_2) + u_B(B_1) + u_A(B_1) 
	= (b-u_A(A_2)) + (u_B(B_1)-u_B(B_2))+u_A(B_1)\geq 0$. 
	Finally, we obtain a contradiction. The other part of the lemma that $u(B_2)> u(B_1)$ can be similarly proved.  
	
\end{proof}

By Lemma~\ref{lem:dominatingNE} and~\ref{lem:devConf_SM}, we obtain Theorem~\ref{thm:PNE_SM} 
as follows.

\begin{theorem}\label{thm:PNE_SM}
In the softmax model with $m = n = 2$, the two-party election game always has a PNE. 
\end{theorem}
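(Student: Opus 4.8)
The plan is to follow the same template that delivered Theorem~\ref{thm:PNE_LL}, reducing the statement to Lemma~\ref{lem:dominatingNE} together with Lemma~\ref{lem:devConf_SM}. First I would split into cases according to whether strategy~$1$ weakly dominates strategy~$2$ in one of the parties. If $u(A_1)\geq u(A_2)$, then strategy~$1$ weakly dominates strategy~$2$ in party~$A$, so Lemma~\ref{lem:dominatingNE} already produces a PNE of the form $(1,j^{\#})$; symmetrically, if $u(B_1)\geq u(B_2)$ it produces a PNE of the form $(i^{\#},1)$. Hence the only regime left to handle is $u(A_2)>u(A_1)$ and $u(B_2)>u(B_1)$, which is exactly the hypothesis under which Lemma~\ref{lem:devConf_SM} applies.

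In that regime I would appeal to the structural dichotomy recorded in Fig.~\ref{fig:2by2_noPNE}: for $m=n=2$, a game with no PNE must realize one of exactly two directed best-response $4$-cycles. The ``left'' cycle requires $\Delta(D_1),\Delta(D_2),\Delta(D_3),\Delta(D_4)$ all strictly positive, and the ``right'' cycle requires $\Delta(D'_1),\Delta(D'_2),\Delta(D'_3),\Delta(D'_4)$ all strictly positive, so it suffices to rule out both conjunctions. This is precisely what Lemma~\ref{lem:devConf_SM} provides: since $u(A_2)>u(A_1)$, $\Delta(D_2)>0$ forces $\Delta(D_4)<0$, so the left cycle cannot close; since $u(B_2)>u(B_1)$, $\Delta(D'_1)>0$ forces $\Delta(D'_3)<0$, so the right cycle cannot close. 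Consequently no no-PNE configuration is possible in this regime, and combined with the two dominance cases above we conclude that the egoistic two-party election game in the softmax model with $m=n=2$ always has a PNE.

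I expect essentially no obstacle to remain at this assembly stage: all the analytic difficulty has already been absorbed into Lemma~\ref{lem:devConf_SM}, whose proof is where one must combine the egoism inequalities with the estimate $e^{-x}\geq 1-x$ to force the ratio of the two competing lower bounds on $u_A(A_1)-u_A(A_2)$ to be at least~$1$. The only points demanding care when stitching the theorem together are bookkeeping: that Fig.~\ref{fig:2by2_noPNE} genuinely enumerates \emph{all} no-PNE patterns of a $2\times2$ game, and that the pairs of deviation gains occurring in the two cycles are exactly the pairs $\{\Delta(D_2),\Delta(D_4)\}$ and $\{\Delta(D'_1),\Delta(D'_3)\}$ that Lemma~\ref{lem:devConf_SM} controls; both follow directly from the identities for $\Delta(D_i)$ and $\Delta(D'_i)$ computed at the start of Section~\ref{sec:m2n2_NE}.
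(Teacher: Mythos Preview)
Your proposal is correct and follows essentially the same approach as the paper: the paper's proof of Theorem~\ref{thm:PNE_SM} is the single sentence ``By Lemma~\ref{lem:dominatingNE} and~\ref{lem:devConf_SM}, we obtain Theorem~\ref{thm:PNE_SM},'' and your write-up simply spells out how these two lemmas combine via the cycle dichotomy of Fig.~\ref{fig:2by2_noPNE}, exactly as was done for Theorem~\ref{thm:PNE_LL}.
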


\subsection{More than Two Candidates per Party}
\label{sec:general_NE}

In this section, we show existence of PNE for a generalized case in which party $A$ has $m\geq 2$ 
candidates and party $B$ has $n\geq 2$ candidates. Since the game in the Bradley-Terry model may 
not have a PNE, we focus on the linear link model and the softmax model.

The two-party election game then has $mn$ possible states (i.e., $\{(i,j)\}_{i\in [m],j\in [n]}$).
The states are bijectively mapped to the entries of the payoff matrix. Regard each entry of the 
payoff matrix as a node and a unilateral deviation with positive gain of payoff as an arc, we 
obtain a directed graph and we call it the {\em state graph}. A {\em best-response walk} is a walk 
on the state graph such that each arc of the walk is a best-response unilateral deviation. Since 
the number of states is finite, any best-response walk on the state graph must contain a loop if the 
game has no PNE. We then derive Theorem~\ref{thm:generel_NE} based on this observation.


\begin{figure}[ht]
	\begin{center}
		\includegraphics[scale=0.40]{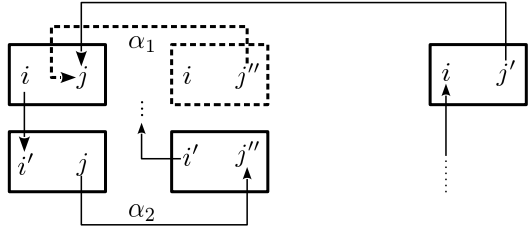}
	\end{center}
\caption{Illustration for the proof of Theorem~\ref{thm:generel_NE}.}
\label{fig:general_NE}
\end{figure}

\begin{theorem}\label{thm:generel_NE}
The two-party election game with $m\geq 2$ and $n\geq 2$ always has a PNE both in the linear 
link model and the softmax model.
\end{theorem}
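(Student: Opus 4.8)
The plan is to reduce the general $m\times n$ case to repeated application of the $2\times 2$ machinery, namely Lemmas~\ref{lem:dominatingNE}, \ref{lem:devConf_LL} and~\ref{lem:devConf_SM}, via a best-response-walk argument on the state graph. First I would recall that, by Lemma~\ref{lem:dominatingNE}, if strategy~$1$ weakly dominates every other strategy in party~$A$, or if strategy~$1$ weakly dominates every other strategy in party~$B$, then a PNE exists; so we may assume neither domination holds, i.e.\ there is some $i$ with $u(A_i)>u(A_1)$ and some $j$ with $u(B_j)>u(B_1)$. The goal is then to suppose for contradiction that no PNE exists and follow a best-response walk; since the state graph is finite, such a walk must eventually enter a cycle, and I want to show that any such cycle is impossible.

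The key step is to analyze the structure of a best-response cycle. Along the cycle, arcs alternate (or can be grouped) between $A$-moves and $B$-moves: when $B$ is fixed at some column $j$, party~$A$'s best response is determined by the ranking of the $p_{i,j}u_A(A_i)+(1-p_{i,j})u_A(B_j)$; by the same egoism computation as in Lemma~\ref{lem:dominatingNE} (the inequality $a_{i,j}-a_{i',j}\ge (p_{i,j}-p_{i',j})(u_A(A_i)-u_A(B_j))$), $A$ never moves from a higher-$u(A_\cdot)$ candidate to a strictly lower one, so along the cycle the $A$-coordinate's social utility $u(A_\cdot)$ is nondecreasing on $A$-moves, and symmetrically $u(B_\cdot)$ is nondecreasing on $B$-moves. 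For the walk to close into a cycle, every $A$-move must keep $u(A_\cdot)$ constant and likewise for $B$, so in fact the cycle lives among states whose coordinates all share the same pair of social-utility values $(u(A),u(B))$ — effectively a $2\times2$-like situation between the "current" candidate and the "deviated-to" candidate in each party. Concretely, pick two consecutive $A$-vertices and two consecutive $B$-vertices appearing in the cycle: these four candidates (two from $A$, two from $B$) induce a $2\times2$ subgame, and the four deviation arcs of the cycle restrict to the four deviations $D_1,\dots,D_4$ (or $D'_1,\dots,D'_4$) of Fig.~\ref{fig:2by2_noPNE} in that subgame, all with positive gain. But Lemma~\ref{lem:devConf_LL} (linear link) resp.\ Lemma~\ref{lem:devConf_SM} (softmax) says that, once $u(A_2)>u(A_1)$ (the relevant coordinate strictly increased, which must happen somewhere on a nontrivial cycle or else the cycle is in a single row/column and trivially impossible), at least one of $D_2,D_4$ fails and at least one of $D'_1,D'_3$ fails — contradicting that all four arcs have positive gain. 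Hence no cycle exists, the best-response walk terminates, and its terminal vertex is a PNE.

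The main obstacle I anticipate is the bookkeeping that lets one extract a genuine $2\times2$ instance from an arbitrary best-response cycle: a cycle in the $m\times n$ state graph can in principle visit many distinct rows and columns, so one must argue carefully that monotonicity of $u(A_\cdot)$ and $u(B_\cdot)$ along the cycle forces the cycle to collapse onto (at most) two relevant rows and two relevant columns with the right configuration of increasing/decreasing deviations, and that the indexing convention ($u_A(A_1)\ge u_A(A_2)$, $u_B(B_1)\ge u_B(B_2)$, etc., which is what makes egoism bite in the lemmas) can be matched up by relabeling. A clean way to handle this is to start the walk at state $(1,1)$ and follow best responses; track that $A$ only ever moves to candidates with $u(A_\cdot)\ge$ current value (and strictly $>$ the original $u(A_1)$ after the first $A$-deviation), similarly for $B$, so the pair $(u(A_{\text{current}}),u(B_{\text{current}}))$ is coordinatewise nondecreasing along the whole walk; a nondecreasing sequence in a finite set must stabilize, and once both coordinates are stable any further deviation sits inside a fixed $2\times2$ subgame with $u(A_2)>u(A_1)$ and $u(B_2)>u(B_1)$, where Lemmas~\ref{lem:devConf_LL}/\ref{lem:devConf_SM} together with the $2\times2$ structural analysis of Fig.~\ref{fig:2by2_noPNE} rule out a cycle. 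This yields Theorem~\ref{thm:generel_NE} for both the linear link and the softmax models.
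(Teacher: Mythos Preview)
Your reduction to the $2\times 2$ lemmas via a best-response cycle is the right high-level idea, but the key monotonicity claim fails. You assert that along the cycle $u(A_\cdot)$ is nondecreasing on $A$-moves, citing the inequality $a_{i,j}-a_{i',j}\ge (p_{i,j}-p_{i',j})(u_A(A_i)-u_A(B_j))$ from Lemma~\ref{lem:dominatingNE}. But that inequality is derived under the hypothesis $u_A(A_i)\ge u_A(A_{i'})$ (i.e.\ $i<i'$ in the sorted indexing); without it the step $p_{i,j}u_A(A_i)-p_{i',j}u_A(A_{i'})\ge (p_{i,j}-p_{i',j})u_A(A_i)$ goes the wrong way. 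In fact $A$ can strictly prefer a candidate with \emph{smaller} total utility $u(A_\cdot)$ whenever that candidate has sufficiently larger supporter utility $u_A(A_\cdot)$: for instance with $b=100$, $u_A(A_1)=10,\ u_B(A_1)=0$ and $u_A(A_2)=5,\ u_B(A_2)=20$, against $B_j$ with $u_A(B_j)=2,\ u_B(B_j)=50$, one computes $a_{1,j}\approx 4.32>a_{2,j}\approx 3.10$ in the linear link model even though $u(A_1)=10<25=u(A_2)$. So a best-response $A$-move can strictly decrease $u(A_\cdot)$, the cycle need not collapse to a single $(u(A),u(B))$ level, and your extraction of a $2\times2$ subgame with all four deviations positive does not go through.

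The paper's proof avoids this by choosing, on the loop, a state whose $A$-coordinate has the \emph{smallest index} $i$ among all $A$-coordinates appearing on the loop (not the smallest $u(A_\cdot)$). Minimality of the index forces the outgoing $A$-deviation $(i,j)\to(i',j)$ to have $i'>i$, hence $u_A(A_i)\ge u_A(A_{i'})$ automatically; since that arc has positive gain, the Lemma~\ref{lem:dominatingNE} computation then yields $u(A_{i'})>u(A_i)$. This supplies exactly the ordering the $2\times2$ Lemmas~\ref{lem:devConf_LL} and~\ref{lem:devConf_SM} require, and the contradiction comes from comparing the $B$-arc $(i',j)\to(i',j'')$ on the loop with the virtual $B$-arc $(i,j'')\to(i,j)$ (nonnegative because $j$ is $B$'s best response in row $i$), without any global monotonicity argument.
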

\begin{proof}
Assume, for contradiction, that the game has no PNE. Then, a best-response walk starting 
from any node contains a loop. Let $L = (i,j)\rightarrow (i',j)\rightarrow (i',j'')\rightarrow\cdots 
\rightarrow (i,j')\rightarrow (i,j)$ be such a loop, 
in which $i$ is the smallest indexed party $A$ candidate among all nodes (states) of~$L$ 
(See Fig.~\ref{fig:general_NE} for the illustration). 
Let $\alpha_1$ and $\alpha_2$ denote the unilateral deviations $(i,j'')\rightarrow (i,j)$ and 
$(i',j)\rightarrow (i',j'')$ respectively. Since $\alpha_2$ is on~$L$, we have $\Delta(\alpha_2)>0$.  
Since $(i,j')\rightarrow (i,j)$ is the best response of $B$ when $A$ selects $i$, 
it must be that $b_{i,j} \geq b_{i,j''}$. Thus, if $A$ selects $i$ and $B$ selects $j''$, 
then we must have $\Delta(\alpha_1)\geq 0$. 

Moreover, since $u(A_i)\geq u(A_{i'})$ implies $a_{i,j}\geq a_{i',j}$, which is impossible 
because $(i,j)\rightarrow(i',j)$ is on $L$, we have $u(A_{i'}) > u(A_i)$. 
Then by Lemma~\ref{lem:devConf_LL} and~\ref{lem:devConf_SM} (with $(i,j)$, $(i',j)$, $(i',j'')$, 
and $(i,j'')$ corresponding to $(1,1)$, $(2,1)$, $(2,2)$, and $(1,2)$ in the case for two candidates, 
respectively), we know that $\Delta(\alpha_2)>0$ implies that $\Delta(\alpha_1) < 0$, 
hence a contradiction is derived.
\end{proof}

\section{Price of Anarchy}
\label{sec:PoA}

A PNE may give a state that is suboptimal in the social welfare for this two-party system. We show a tight bound of $2$ on the PoA in the linear link model and the Bradley-Terry model. At the end of this section, we give an upper bound of $1+e$ in the softmax model.

Below is a useful claim which is used in this section. 
 
\begin{lemma}\label{cla:dominated_Not_PNE}
If $i$ is dominated by some $i'$, then $(i,j)$ is not a PNE for any $j\in [n]$. Similarly, 
if $j$ is dominated by some $j'$, then $(i,j)$ is not a PNE for any $i\in [m]$.  
\end{lemma}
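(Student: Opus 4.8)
The plan is to prove the first assertion by exhibiting a profitable unilateral deviation for party $A$ at the state $(i,j)$, namely the move $i \rightarrow i'$; the second assertion then follows by the symmetric argument with the two parties, and the two halves of the egoism condition, interchanged. Throughout I use the egoism assumption, which is what makes the statement true (without it, a strategy with higher social utility need not yield a higher payoff).

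First I would collect what the hypothesis gives. By Definition~\ref{defn:strategy_domination}, ``$i$ dominated by $i'$'' means $i' < i$ and $u(A_{i'}) > u(A_i)$. Since party $A$'s candidates are indexed in non-increasing order of $u_A$, the inequality $i' < i$ gives $u_A(A_{i'}) \ge u_A(A_i)$; and egoism gives $u_A(A_i) > u_A(B_j)$, so $u_A(A_{i'}) \ge u_A(A_i) > u_A(B_j)$, and in particular both of $u_A(A_{i'}) - u_A(B_j)$ and $u_A(A_i) - u_A(B_j)$ are strictly positive. Next I would strengthen the Remark to a strict form: in each of the three models $p_{i,j}$ is a \emph{strictly} increasing function of $u(A_i)$. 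For the linear link and softmax models this is immediate from the formulas; for Bradley--Terry one uses that egoism forces $u(B_j) = u_A(B_j) + u_B(B_j) > 0$, so that $t \mapsto t/(t + u(B_j))$ is strictly increasing. Hence $u(A_{i'}) > u(A_i)$ implies $p_{i',j} > p_{i,j}$.

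Then it is essentially a one-line computation. Rewriting $a_{i,j} = u_A(B_j) + p_{i,j}\bigl(u_A(A_i) - u_A(B_j)\bigr)$, we get
\[
a_{i',j} - a_{i,j} \;=\; p_{i',j}\bigl(u_A(A_{i'}) - u_A(B_j)\bigr) - p_{i,j}\bigl(u_A(A_i) - u_A(B_j)\bigr).
\]
Since $p_{i',j} > p_{i,j} \ge 0$ and $u_A(A_{i'}) - u_A(B_j) \ge u_A(A_i) - u_A(B_j) > 0$, the first term strictly exceeds $p_{i,j}\bigl(u_A(A_{i'}) - u_A(B_j)\bigr)$, which in turn is at least $p_{i,j}\bigl(u_A(A_i) - u_A(B_j)\bigr)$; whence $a_{i',j} > a_{i,j}$. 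Thus at $(i,j)$ party $A$ strictly prefers $i'$ to $i$, so $(i,j)$ is not a PNE. For a dominated strategy $j$ of party $B$ the same chain, applied to $b_{i,j} = u_B(A_i) + (1 - p_{i,j})\bigl(u_B(B_j) - u_B(A_i)\bigr)$ with $u_B(B_j) > u_B(A_i)$ from egoism and with $1 - p_{i,j}$ strictly increasing in $u(B_j)$, shows $b_{i,j'} > b_{i,j}$.

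I expect the only delicate point to be the strictness of the monotonicity of $p$: the Remark as stated is only a weak inequality, and a domination that costs nothing on the winning-probability side (for instance a Bradley--Terry instance with a zero-social-utility opponent) would make the deviation non-strict. Egoism is exactly what closes this gap, by keeping the Bradley--Terry denominator positive and by forcing $u_A(A_i) > u_A(B_j)$ (resp.\ $u_B(B_j) > u_B(A_i)$); so the main --- and essentially only --- obstacle is to verify these strict versions carefully in each of the three models.
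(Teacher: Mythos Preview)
Your argument is correct and is essentially the paper's own proof: both rewrite $a_{i,j}$ so that the difference $a_{i',j}-a_{i,j}$ factors through $p_{i',j}>p_{i,j}$, $u_A(A_{i'})\ge u_A(A_i)$, and the egoism inequality $u_A(A_i)>u_A(B_j)$. If anything, you are slightly more careful than the paper --- you correctly write $u_A(A_{i'})\ge u_A(A_i)$ (the paper claims a strict inequality that need not hold), and you spell out why strict monotonicity of $p$ in $u(A_i)$ holds in each model, including the Bradley--Terry case where egoism is needed to keep $u(B_j)>0$.
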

\begin{proof}
If $i$ is dominated by $i'$, then $i'< i$ and $u(A_{i'}) > u(A_i)$. 
So $u_A(A_{i'}) > u_A(A_i)$ and $p_{i',j} > p_{i,j}$ for any~$j\in [n]$. Thus, 
$a_{i,j} - a_{i',j} = p_{i,j}u_A(A_i)+(1-p_{i,j})u_A(B_j) - (p_{i',j}u_A(A_{i'})+(1-p_{i',j})u_A(B_j))
< (p_{i,j} - p_{i',j})(u_A(A_i) - u_A(B_j)) < 0$, which implies that $A$ wants to 
deviate from~$i$ to~$i'$. Then $(i,j)$ is not a PNE. The second part of the claim can 
be similarly proved. 
\end{proof}

\subsection{Linear Link Model}
\label{subsec:PoA_LL}

In this subsection, we show that, in the linear link model, the PoA of the two-party election game is bounded by~$2$. We also give an instance to justify that the bound is tight.


Note that 
\begin{eqnarray*}
	SU_{i,j} &=& p_{i,j}\cdot u(A_i) + (1-p_{i,j})\cdot u(B_j)\\
	&=&\frac{1+(u(A_i)-u(B_j))/b}{2} \cdot u(A_i)\\ 
	&&+ \frac{1-(u(A_i)-u(B_j))/b}{2}\cdot u(B_j)\\
	&=& \frac{1}{2}(u(A_i)+u(B_j)) + \frac{1}{2b}(u(A_i)-u(B_j))^2\\
	&\geq& \frac{1}{2}(u(A_i)+u(B_j)).
\end{eqnarray*}
and 
\begin{eqnarray*}
	SU_{i,j} &=& p_{i,j}\cdot u(A_i) + (1-p_{i,j})\cdot u(B_j)\leq \max\{u(A_i), u(B_j)\}.
\end{eqnarray*}

\begin{theorem}\label{thm:PoA_LL}
	The two-party election game in the linear link model 
	has the price of anarchy bounded by~2. 
\end{theorem}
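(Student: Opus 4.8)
The goal is to show $SU_{i^*,j^*}/SU_{\hat\imath,\hat\jmath}\le 2$ for every PNE $(\hat\imath,\hat\jmath)$. The two displayed inequalities just before the theorem give us the two halves of the argument: from below, $SU_{i,j}\ge\frac12(u(A_i)+u(B_j))$, and from above, $SU_{i,j}\le\max\{u(A_i),u(B_j)\}$. So it suffices to produce, for the worst PNE $(\hat\imath,\hat\jmath)$, a lower bound on $SU_{\hat\imath,\hat\jmath}$ that is at least half of $SU_{i^*,j^*}$. Applying the upper bound to the optimum, $SU_{i^*,j^*}\le\max\{u(A_{i^*}),u(B_{j^*})\}$, so it is enough to show
\[
SU_{\hat\imath,\hat\jmath}\ \ge\ \tfrac12\max\{u(A_{i^*}),u(B_{j^*})\},
\]
and for that the lower bound $SU_{\hat\imath,\hat\jmath}\ge\frac12(u(A_{\hat\imath})+u(B_{\hat\jmath}))$ reduces everything to comparing the social utilities of the equilibrium candidates with those of the optimal candidates.

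**Key step: the equilibrium candidates are not too weak.** By Lemma~\ref{cla:dominated_Not_PNE}, in any PNE $(\hat\imath,\hat\jmath)$ the strategy $\hat\imath$ is not dominated, so in particular $u(A_{\hat\imath})\ge u(A_1)$ — that is, strategy~$1$ does not dominate $\hat\imath$, which forces $u(A_{\hat\imath})\ge u(A_1)$; wait, more carefully, ``not dominated'' means no $i'<\hat\imath$ has strictly larger social utility, so $u(A_{\hat\imath})\ge\max_{i<\hat\imath}u(A_i)\ge u(A_1)$ only if we also track $i>\hat\imath$. The clean route instead: I would argue that $u(A_{\hat\imath})\ge u(A_{i^*})$ OR $u(B_{\hat\jmath})\ge u(B_{j^*})$ is not automatic, so instead bound $SU_{\hat\imath,\hat\jmath}$ directly via the better of $A$'s and $B$'s deviation options. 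Concretely, since $(\hat\imath,\hat\jmath)$ is a PNE, $A$ does not want to deviate to $i^*$ and $B$ does not want to deviate to $j^*$; from $a_{\hat\imath,\hat\jmath}\ge a_{i^*,\hat\jmath}$ and $b_{\hat\imath,\hat\jmath}\ge b_{\hat\imath,j^*}$, together with the explicit payoff formulas and egoism (so that deviating ``up'' only helps the deviator's own supporters by a bounded amount), I can lower-bound $a_{\hat\imath,\hat\jmath}+b_{\hat\imath,\hat\jmath}$ in terms of $\max\{u(A_1),u(B_1)\}=u(A_1)$, and then relate $u(A_1)$ to $\max\{u(A_{i^*}),u(B_{j^*})\}$.

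**The main obstacle.** The delicate point is that at a non-dominated PNE the selected candidates could still have small social utility if they are ``specialists'' — high own-party utility but poor total utility — yet $SU$ stays large because the payoff $a_{i,j}$ rewards own-party utility directly, not just social utility. So the real work is extracting a lower bound on $SU_{\hat\imath,\hat\jmath}$ from the equilibrium inequalities and egoism rather than from the crude $SU\ge\frac12(u(A_i)+u(B_j))$ bound alone. I expect the cleanest argument is: in the PNE, the winning candidate (whoever has $p_{\hat\imath,\hat\jmath}\ge\frac12$, which by the tie-breaking assumption $u_A(A_1)\ge u_B(B_1)$ and non-domination is typically $A$) contributes at least $\frac12 u_A(A_{\hat\imath})\ge\frac12 u_A(A_1)$ to the social welfare via $a_{\hat\imath,\hat\jmath}\ge p_{\hat\imath,\hat\jmath}u_A(A_{\hat\imath})\ge\frac12 u_A(A_1)$ once one checks $p_{\hat\imath,\hat\jmath}\ge\frac12$ and $u_A(A_{\hat\imath})\ge u_A(A_1)$ — and separately $SU_{i^*,j^*}\le\max\{u(A_{i^*}),u(B_{j^*})\}\le u_A(A_1)+\text{(bounded non-supporter part)}$, which with egoism gives $SU_{i^*,j^*}\le 2u_A(A_1)$ (or $\le b\le$ a comparable quantity). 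Patching these two estimates together yields the ratio~$2$.

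**Tightness.** Finally I would exhibit a $2\times 2$ instance, egoistic or not as the paper prefers, where the unique PNE is $(1,1)$ with $p_{1,1}$ close to $1$ and $u(B_1)\approx 0$, while the optimum state has $u(A_{i^*})\approx u(B_{j^*})\approx b$ so that $SU_{i^*,j^*}\approx b$ but $SU_{1,1}\approx b/2$, making the ratio approach~$2$; I would just write down explicit numbers (e.g.\ with $b$ large) and check the no-deviation inequalities by hand.
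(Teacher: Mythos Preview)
Your plan of attack is exactly right and matches the paper: bound the optimum above by $\max\{u(A_{i^*}),u(B_{j^*})\}$, bound the PNE below by $\tfrac12(u(A_{\hat\imath})+u(B_{\hat\jmath}))$, and reduce the problem to showing
\[
u(A_{\hat\imath})+u(B_{\hat\jmath})\ \ge\ \max\{u(A_{i^*}),u(B_{j^*})\}.
\]
But you never establish this inequality, and the detour you take after abandoning the non-domination idea contains genuine errors.

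First, the route you abandon is the correct one. From Lemma~\ref{cla:dominated_Not_PNE}, $\hat\imath$ is not dominated by $i^*$, which by Definition~\ref{defn:strategy_domination} means: either $\hat\imath\le i^*$ or $u(A_{i^*})\le u(A_{\hat\imath})$. Likewise for $\hat\jmath$ and $j^*$. This yields four cases. The only non-trivial one is $\hat\imath\le i^*$ and $\hat\jmath\le j^*$, and there the missing observation is simply
\[
u(A_{\hat\imath})+u(B_{\hat\jmath}) \;\ge\; u_A(A_{\hat\imath})+u_B(B_{\hat\jmath}) \;\ge\; u_A(A_{i^*})+u_B(A_{i^*}) \;=\; u(A_{i^*}),
\]
where the middle inequality uses $u_A(A_{\hat\imath})\ge u_A(A_{i^*})$ (from the sorting, since $\hat\imath\le i^*$) and $u_B(B_{\hat\jmath})>u_B(A_{i^*})$ (egoism). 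The symmetric argument gives $u(A_{\hat\imath})+u(B_{\hat\jmath})\ge u(B_{j^*})$, and the other three cases are easier. That is the whole proof.

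Second, the alternative you sketch in the ``Main obstacle'' paragraph does not work as written. The inequality $u_A(A_{\hat\imath})\ge u_A(A_1)$ is backwards: candidates are sorted so that $u_A(A_1)\ge u_A(A_i)$ for every $i$, hence $u_A(A_{\hat\imath})\le u_A(A_1)$. And $p_{\hat\imath,\hat\jmath}\ge \tfrac12$ is not guaranteed at a PNE (nothing forces $u(A_{\hat\imath})\ge u(B_{\hat\jmath})$). So the chain $a_{\hat\imath,\hat\jmath}\ge p_{\hat\imath,\hat\jmath}\,u_A(A_{\hat\imath})\ge\tfrac12 u_A(A_1)$ fails at both steps.

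Finally, your tightness sketch is off. If $p_{1,1}\approx 1$ and $u(B_1)\approx 0$ then $SU_{1,1}\approx u(A_1)$, which is already near the maximum possible, so the ratio is near~$1$, not~$2$. The paper's tight instance instead takes $p_{1,1}=\tfrac12$ with both $u(A_1)=u(B_1)=\epsilon$ small and $u(A_2)=u(B_2)\approx 2\epsilon$, so $SU_{1,1}=\epsilon$ while $SU_{2,2}\approx 2\epsilon$; $(1,1)$ is the PNE and the ratio tends to~$2$.
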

\begin{proof}
Let $(i,j)$ be a PNE and $(i^*,j^*)$ be the optimal state. So we have, 
\[
\left\{\begin{array}{l}
i\mbox{ is not dominated by } i^*\\
j\mbox{ is not dominated by } j^*
\end{array}
\right.
\Rightarrow
\left\{\begin{array}{l}
i\leq i^*\mbox{ or } u(A_{i^*})\leq u(A_i)\\
j\leq j^*\mbox{ or } u(B_{j^*})\leq u(B_j)
\end{array}
\right.
\]
Recall that $SU_{i^*,j^*}\leq \max\{u(A_{i^*}), u(B_{j^*})\}$. Moreover, 
$2\cdot SU_{i,j}\geq u(A_i)+u(B_j)$. Consider the following four cases.
\begin{enumerate}
	\item If $i\leq i^*$ and $j\leq j^*$, then, 
	$u(A_i)+u(B_j) = u_A(A_i)+u_B(A_i)+u_B(B_j)+u_A(B_j)\geq u_A(A_{i^*})+u_B(A_{i^*}) = u(A_{i^*})$ 
	since $u_A(A_i)\geq u_A(A_{i^*})$ and $u_B(B_j)> u_B(A_{i^*})$. 
	Similarly, we have $u(A_i)+u(B_j)\geq u(B_{j^*})$. 
	Hence, we have $u(A_i)+u(B_j)\geq \max\{u(A_{i^*}), u(B_{j^*})\}$. 
	\item If $i\leq i^*$ and $u(B_{j^*})\leq u(B_j)$, then 
	$u(A_i)+u(B_j)\geq u(A_{i^*})$ and $u(A_i) + u(B_j)\geq u(B_j)\geq u(B_{j^*})$ 
	obviously holds. So, we have $u(A_i)+u(B_j)\geq \max\{u(A_{i^*}), u(B_{j^*})\}$. 
	\item If $u(A_{i^*})\leq u(A_i)$ and $j\leq j^*$, similarly we obtain 
	$u(A_i)+u(B_j)\geq \max\{u(A_{i^*}), u(B_{j^*})\}$. 
	\item If $u(A_{i^*})\leq u(A_i)$ and $u(B_{j^*}) \leq u(B_j)$, obviously we have 
	$u(A_i)+u(B_j)\geq u(A_{i^*}) + u(B_{j^*})\geq \max\{u(A_{i^*}), u(B_{j^*})\}$. 
\end{enumerate}
Thus, we conclude that $SU_{i,j}\geq SU_{i^*,j^*}/2$, Therefore, the PoA is bounded by~2. 
\end{proof}

\subsubsection{A Tight Example.}
\label{subsubsec:LB_LL}

Let us consider the game instance of $m=n=2$ in Table~\ref{tab:PoA_tight}. Let $\epsilon,\delta > 0$ 
be two small constants such that $\epsilon,\delta \ll b$ and $\delta\ll \epsilon$. In the linear link model, 
$p_{1,1} = p_{2,2} = 1/2$, $p_{1,2} = (1-(\epsilon-2\delta)/b)/2 = 1/2 - (\epsilon-2\delta)/2b$, 
$p_{2,1} = (1+(\epsilon-2\delta)/b)/2 = 1/2 + (\epsilon-2\delta)/2b$. Hence, 
\[
\left\{
\begin{array}{l}
a_{1,1} = b_{1,1} = \frac{\epsilon}{2},\\
a_{1,2} = b_{2,1} = 
\epsilon - \delta\cdot(\frac{1}{2}+ \frac{\epsilon-2\delta}{2b})\approx 
\epsilon - \frac{\delta}{2},\\
a_{2,1} = b_{1,2} = 
\frac{\epsilon}{2} + \frac{1}{2b}\cdot ((\epsilon-2\delta)(\epsilon-\delta) - b\delta) \approx 
\frac{\epsilon}{2} - \frac{\delta}{2}.
\end{array}\right.
\]
It is easy to see that $(1,1)$ is the only PNE and $(2,2)$ is the optimal state. Thus, the PoA of the instance is approximately $$2-\frac{2\delta}{\epsilon},$$ which is close to~2 
as $\delta/\epsilon$ approaches~0.

\begin{table}[ht]
\begin{center}
	\begin{tabular}[c]{ l l | l l }
		\multicolumn{4}{ c }{}\\
		$A$ & \multicolumn{1}{c}{}& $B$ & \\
		\hline
		$u_A(A_i)$ & $u_B(A_i)$ & $u_B(B_j)$ & $u_A(B_j)$\\
		\hline
		$\epsilon$  &  0 &  $\epsilon$  &  0\\
		$\epsilon-\delta$   &  $\epsilon-\delta$  &  $\epsilon-\delta$   &  $\epsilon-\delta$\\
		\hline
	\end{tabular}
	\vspace{7pt}\\
	\renewcommand{\arraystretch}{1.25}
	\begin{tabular}[c]{ l | l }
		\centering
		$a_{1,1}$, $b_{1,1}$  &  $a_{1,2}$, $b_{1,2}$\\
		\hline
		$a_{2,1}$, $b_{2,1}$  &  $a_{2,2}$, $b_{2,2}$\\
	\end{tabular}
	$\approx$
	\renewcommand{\arraystretch}{1.25}
	\begin{tabular}[c]{  l l | l l }
		\centering
		$\frac{\epsilon}{2}$, & $\frac{\epsilon}{2}$  &  
		$\epsilon-\frac{\delta}{2}$, & $\frac{\epsilon}{2}-\frac{\delta}{2}$\\[2pt]
		\hline 
		$\frac{\epsilon}{2}-\frac{\delta}{2}$, &  $\epsilon-\frac{\delta}{2}$  &  
		$\epsilon-\delta$, & $\epsilon-\delta$\\
	\end{tabular}
	\vspace{6pt}
	\caption{An example illustrating a lower bound on the worst PoA of the egoistic game in both the linear link model and the softmax model ($m = n = 2$, $0<\epsilon, \delta \ll b$, and $\delta \ll \epsilon$).}
	\label{tab:PoA_tight}
\end{center}
\end{table}

\subsection{Bradley-Terry Model}
\label{subsec:PoA_BT}

Though in Sect.~\ref{subsec:BT}, we show that a PNE does not always exist in the Bradley-Terry model, one may be curious about how good or bad its PoA is once it has a PNE. In this subsection, we show that its price of anarchy is bounded by~$2$ in this model once a PNE exists in the game. A lower bound on the PoA is also given.

\begin{theorem}\label{thm:PoA_BT}
	The two-party election game in the Bradley-Terry model 
	has the price of anarchy bounded by~2. 
\end{theorem}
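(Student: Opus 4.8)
The plan is to mimic the structure of the proof of Theorem~\ref{thm:PoA_LL}, since the overall logic transfers: a PNE cannot be at a dominated strategy (Lemma~\ref{cla:dominated_Not_PNE}), so if $(i,j)$ is a PNE and $(i^*,j^*)$ is optimal, then $i$ is not dominated by $i^*$ and $j$ is not dominated by $j^*$, which gives the same four-case disjunction ($i \le i^*$ or $u(A_{i^*}) \le u(A_i)$, and $j \le j^*$ or $u(B_{j^*}) \le u(B_j)$). In all four cases the same elementary argument used in Theorem~\ref{thm:PoA_LL} — combining monotonicity of $u_A(A_\cdot)$, $u_B(B_\cdot)$ in the index with the egoism inequalities $u_B(B_j) > u_B(A_{i^*})$ and $u_A(A_i) > u_A(B_{j^*})$ — yields $u(A_i) + u(B_j) \ge \max\{u(A_{i^*}), u(B_{j^*})\}$. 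So the quantity $u(A_i)+u(B_j)$ is controlled exactly as before; the only thing that changes relative to the linear link model is how $SU_{i,j}$ and $SU_{i^*,j^*}$ relate to these raw social utilities.

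So the two inequalities I need to establish in the Bradley-Terry model are: (a) an upper bound $SU_{i^*,j^*} \le \max\{u(A_{i^*}), u(B_{j^*})\}$, and (b) a lower bound $SU_{i,j} \ge \tfrac{1}{2}(u(A_i)+u(B_j))$. For (a): $SU_{i^*,j^*} = p_{i^*,j^*} u(A_{i^*}) + (1-p_{i^*,j^*}) u(B_{j^*})$ is a convex combination of $u(A_{i^*})$ and $u(B_{j^*})$, hence at most their maximum — this holds for any of the three models and requires nothing model-specific. For (b): with $p = u(A_i)/(u(A_i)+u(B_j))$, I compute
\begin{align*}
SU_{i,j} &= \frac{u(A_i)^2 + u(B_j)^2}{u(A_i) + u(B_j)}.
\end{align*}
Then $SU_{i,j} \ge \tfrac12(u(A_i)+u(B_j))$ is equivalent to $2(u(A_i)^2 + u(B_j)^2) \ge (u(A_i)+u(B_j))^2$, i.e. $(u(A_i)-u(B_j))^2 \ge 0$, which always holds. (One should note the degenerate case $u(A_i) = u(B_j) = 0$, where $p_{i,j}$ is undefined; this can be handled by convention or excluded, and in any case it makes $SU_{i,j}$ and the bound both zero.) Combining (a), (b), and $u(A_i)+u(B_j) \ge \max\{u(A_{i^*}),u(B_{j^*})\}$ gives $SU_{i,j} \ge \tfrac12 SU_{i^*,j^*}$, so $\mathrm{PoA} \le 2$.

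I do not anticipate a genuine obstacle here — the argument is essentially the linear-link proof with a one-line replacement of the $SU$ lower bound, and the key algebraic fact reduces to $(u(A_i)-u(B_j))^2 \ge 0$. The only place requiring care is the boundary behavior of $p_{i,j}$ when a social utility is $0$ (or when both are), and confirming that the four-case argument bounding $u(A_i)+u(B_j)$ goes through verbatim, which it does since it used only the ordering of utilities within a party and the egoism inequalities, none of which depend on the choice of winning-odds model. Finally I would remark that tightness follows from essentially the same instance as in Table~\ref{tab:PoA_tight} (or an analogous small perturbation), so the bound of $2$ is optimal.
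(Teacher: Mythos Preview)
Your proof of the upper bound is correct and essentially identical to the paper's: both compute $SU_{i,j} = (u(A_i)^2+u(B_j)^2)/(u(A_i)+u(B_j))$ and bound it below by $\tfrac12(u(A_i)+u(B_j))$ (the paper phrases the key inequality as Cauchy--Schwarz, you as $(u(A_i)-u(B_j))^2\ge 0$, which is the same thing), then defer to the four-case analysis of Theorem~\ref{thm:PoA_LL}.

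Your closing remark about tightness, however, is wrong. The instance of Table~\ref{tab:PoA_tight} does \emph{not} make $(1,1)$ a PNE under Bradley--Terry probabilities: with $u(A_1)=\epsilon$ and $u(A_2)=2(\epsilon-\delta)$ one gets $p_{2,1}=2(\epsilon-\delta)/(3\epsilon-2\delta)\approx 2/3$, so $a_{2,1}\approx 2\epsilon/3 > \epsilon/2 = a_{1,1}$ and $A$ deviates. The paper in fact gives a different instance (Table~\ref{tab:PoA_LB_BT}) that only achieves a lower bound of $3/2$, and explicitly leaves open whether $2$ is attained in this model (conjecturing in the conclusion that the true worst-case PoA is strictly below~$2$). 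So drop the tightness claim.
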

\begin{proof}
Recall that $SU_{i^*,j^*}\leq \max\{u(A_{i^*}), u(B_{j^*})\}$. 
Moreover, the Cauchy-Schwarz inequality implies that 
$u(A_i)^2+u(B_j)^2\geq (u(A_i)+u(B_j))^2/2$. 
Hence, we obtain that
\begin{eqnarray*}
SU_{i,j}&=& \frac{u(A_i)}{u(A_i)+u(B_j)}\cdot u(A_i) + 
\frac{u(B_j)}{u(A_i)+u(A_j)}\cdot u(B_j)\\ 
&=&  \frac{u(A_i)^2+u(B_j)^2}{u(A_i)+u(B_j)} \geq \frac{1}{2}\cdot (u(A_i)+u(B_j)).
\end{eqnarray*}
Refer to the four cases in the proof of Theorem~\ref{thm:PoA_LL}, 
the rest of the proof follows.  
\end{proof}

\subsubsection{A Lower Bound Example.}

Consider the instance in Table~\ref{tab:PoA_LB_BT}. We have $p_{1,1} = p_{2,2} = 1/2$, 
$p_{1,2} = \epsilon/(2\epsilon-\delta)$, $p_{2,1} = (\epsilon-\delta)/(2\epsilon-\delta)$. 
Hence, $a_{1,1} = b_{1,1} = \epsilon/2$, $a_{2,1} = b_{1,2}\approx (\epsilon-\delta)/2$, 
and $a_{2,2} = b_{2,2}\approx \epsilon/2$. Then states $(1,1)$ and $(2,2)$ are both 
PNE and the PoA is bounded by~$(3\epsilon/2-\delta/2)/\epsilon<3/2$.

\begin{table}[ht]
\begin{center}
	\begin{tabular}[c]{ l l | l l }
		\multicolumn{4}{ c }{}\\
		$A$ & \multicolumn{1}{c}{}& $B$ & \\
		\hline
		$u_A(A_i)$ & $u_B(A_i)$ & $u_B(B_j)$ & $u_A(B_j)$\\
		\hline
		$\epsilon$  &  0 &  $\epsilon$  &  0\\
		$\epsilon-\delta$   &  $\delta$  &  $\epsilon-\delta$   &  $\delta$\\
		\hline
	\end{tabular}
	\vspace{7pt}\\
	\renewcommand{\arraystretch}{1.25}
	\begin{tabular}[c]{ l | l }
		\centering
		$a_{1,1}$, $b_{1,1}$  &  $a_{1,2}$, $b_{1,2}$\\
		\hline
		$a_{2,1}$, $b_{2,1}$  &  $a_{2,2}$, $b_{2,2}$\\
	\end{tabular}
	$\approx$
	\renewcommand{\arraystretch}{1.25}
	\begin{tabular}[c]{  l l | l l }
		\centering
		$\frac{\epsilon}{2}$, & $\frac{\epsilon}{2}$  &  
		$\epsilon$, & $\frac{\epsilon-\delta}{2}$\\[2pt]
		\hline 
		$\frac{\epsilon-\delta}{2}$, &  $\epsilon$  &  
		$\frac{\epsilon}{2}$, & $\frac{\epsilon}{2}$ \\
	\end{tabular}
	\vspace{6pt}
	\caption{An example illustrating a lower bound $3/2$ on the worst PoA of the egoistic game in the 
		Bradley-Terry model ($m = n = 2$, $0<\delta \ll \epsilon$).}
	\label{tab:PoA_LB_BT}
\end{center}
\end{table}

\subsection{Softmax Model}
\label{subsec:PoA_SM}

In this subsection, we show that the price of anarchy of the game 
is bounded by~$1+e$ in the softmax model. A lower bound on the PoA is also given.


Note that $p_{i,j}\geq e^0/(e^0+e)\geq 1/(1+e)$ and $p_{i,j}\leq e/(e^0+e)\leq e/(1+e)$ for any $i,j$. 
\begin{eqnarray*}
	SU_{i,j} &=& p_{i,j}\cdot u(A_i) + (1-p_{i,j})\cdot u(B_{j})\\
	&\geq& \min\{p_{i,j}, 1-p_{i,j}\}\cdot (u(A_i)+u(B_j))\\
	&\geq & \frac{1}{1+e}(u(A_i)+u(B_j)), 
\end{eqnarray*}
and 
\begin{eqnarray*}
	SU_{i,j} &=& p_{i,j}\cdot u(A_i) + (1-p_{i,j})\cdot u(B_{j})\\
	&\leq& 
	\frac{e}{1+e}\cdot \max\{u(A_i),u(B_j)\} + \frac{1}{1+e}\cdot \min\{u(A_i),u(B_j)\}.
\end{eqnarray*}
Now, we are ready for Theorem~\ref{thm:PoA_SF} and its proof. 

\begin{theorem}\label{thm:PoA_SF}
	The two-party election game in the softmax model has the PoA bounded by~$1+e$. 
\end{theorem}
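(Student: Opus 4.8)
The plan is to run the same argument as in the proof of Theorem~\ref{thm:PoA_LL}, simply substituting the two softmax estimates for $SU_{i,j}$ established just above the theorem statement in place of the linear-link ones. Let $(\hat i,\hat j)$ be the worst PNE and $(i^*,j^*)$ the optimal state; write $(i,j)=(\hat i,\hat j)$ for brevity. Because $(i,j)$ is a PNE, Lemma~\ref{cla:dominated_Not_PNE} tells us that $i$ is not dominated by $i^*$ and $j$ is not dominated by $j^*$, i.e. $i\le i^*$ or $u(A_{i^*})\le u(A_i)$, and likewise $j\le j^*$ or $u(B_{j^*})\le u(B_j)$.

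The first key step is to observe that the four-case case distinction used inside the proof of Theorem~\ref{thm:PoA_LL} is model-independent: it only invokes the egoism property ($u_B(B_j)>u_B(A_{i^*})$, etc.) and the sorting of each party's candidates ($u_A(A_i)\ge u_A(A_{i^*})$, etc.), never the specific form of $p_{i,j}$. Hence the very same case analysis yields $u(A_i)+u(B_j)\ge \max\{u(A_{i^*}),u(B_{j^*})\}$ in the softmax model as well, and I would just cite it rather than reproduce it.

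Next I would plug in the softmax bounds. On the equilibrium side, $SU_{i,j}\ge \frac{1}{1+e}\bigl(u(A_i)+u(B_j)\bigr)\ge \frac{1}{1+e}\max\{u(A_{i^*}),u(B_{j^*})\}$. On the optimum side, since $\frac{e}{1+e}+\frac{1}{1+e}=1$ and the minimum of two numbers is at most their maximum, the stated upper bound gives $SU_{i^*,j^*}\le \frac{e}{1+e}\max\{u(A_{i^*}),u(B_{j^*})\}+\frac{1}{1+e}\min\{u(A_{i^*}),u(B_{j^*})\}\le \max\{u(A_{i^*}),u(B_{j^*})\}$. Dividing the two inequalities yields $SU_{i^*,j^*}/SU_{i,j}\le 1+e$, which is the claim. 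There is no genuine obstacle here; the only point to check is that Lemma~\ref{cla:dominated_Not_PNE} and the four-case bound really do transfer, and they do, since both rest solely on egoism and on the monotonicity $p_{i',j}\ge p_{i,j}$ whenever $u(A_{i'})\ge u(A_i)$ (the Remark), which holds in all three models. The loss of tightness relative to the bound of $2$ in the linear-link case is expected: in the softmax model the winning probability is confined to $[\frac{1}{1+e},\frac{e}{1+e}]$ rather than being able to attain the extremes $0$ and $1$, so the worst-case social-welfare ratio degrades accordingly.
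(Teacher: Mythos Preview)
Your argument is correct and follows the same skeleton as the paper's: invoke Lemma~\ref{cla:dominated_Not_PNE} to get the non-domination alternatives, compare $u(A_i)+u(B_j)$ to the optimum via a four-case split, and combine with the softmax lower bound $SU_{i,j}\ge\frac{1}{1+e}\bigl(u(A_i)+u(B_j)\bigr)$.

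There is one minor difference worth recording. You reuse the case analysis from Theorem~\ref{thm:PoA_LL} verbatim to obtain the strong, model-independent inequality $u(A_i)+u(B_j)\ge\max\{u(A_{i^*}),u(B_{j^*})\}$, and then pair it with the trivial convex-combination bound $SU_{i^*,j^*}\le\max\{u(A_{i^*}),u(B_{j^*})\}$. The paper instead redoes the four cases with a weaker target, namely $u(A_i)+u(B_j)\ge\frac{e}{1+e}\max\{u(A_{i^*}),u(B_{j^*})\}+\frac{1}{1+e}\min\{u(A_{i^*}),u(B_{j^*})\}$, which it matches against the sharper softmax upper bound $SU_{i^*,j^*}\le\frac{e}{1+e}\max+\frac{1}{1+e}\min$ displayed just before the theorem. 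Both routes land at exactly the same $1+e$ bound; your version is a clean simplification, since the four-case bound from Theorem~\ref{thm:PoA_LL} is indeed model-independent and strictly stronger than what the paper re-establishes here.
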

\begin{proof}
Let $(i,j)$ be a PNE and $(i^*,j^*)$ be the optimal state. So we have, 
\[
\left\{\begin{array}{l}
i\mbox{ is not dominated by } i^*\\
j\mbox{ is not dominated by } j^*
\end{array}
\right.
\Rightarrow
\left\{\begin{array}{l}
i\leq i^*\mbox{ or } u(A_{i^*})\leq u(A_i)\\
j\leq j^*\mbox{ or } u(B_{j^*})\leq u(B_j)
\end{array}
\right.
\]
Recall that $(1+e)\cdot SU(i,j)\geq (u(A_i)+u(B_j))$. Without loss of generality, let $\max\{u(A_{i^*}),u(B_{j^*})\} = u(A_{i^*})$. 
Consider the following four cases. 
\begin{enumerate}
	\item $i\leq i^*$ and $j\leq j^*$. Then, 
	\[
	u(A_i)+u(B_j) - (\frac{e}{1+e}\cdot u(A_{i^*}) + \frac{1}{1+e}\cdot u(B_{j^*}))\geq 0
	\] since $u_A(A_i)\geq (e/(1+e))u_A(A_{i^*}) + (1/(1+e))u_A(B_{j^*})$ and $u_B(B_j)\geq (e/(1+e))u_B(A_{i^*}) + \break (1/(1+e))u_B(B_{j^*})$. 
	\item $i\leq i^*$ and $u(B_{j^*})\leq u(B_j)$. Then 
	\begin{eqnarray*}
	& & u(A_i)+u(B_j) - (\frac{e}{1+e}\cdot u(A_{i^*}) + \frac{1}{1+e}\cdot u(B_{j^*}))\\
	& \geq & (u_A(A_i)+u_B(A_i))+ \frac{e}{1+e}(u_B(B_j)+u_A(B_j))\\ 
	&&- \frac{e}{1+e}(u_A(A_{i^*}) + u_B(A_{i^*}))\\
	&\geq & 0, 
	\end{eqnarray*}
	in which the last inequality holds since $u_A(A_i)\geq u_A(A_{i^*})$ and $u_B(B_j) > u_B(A_{i^*})$. 
	\item $u(A_{i^*})\leq u(A_i)$ and $j\leq j^*$. This case is similar to (2). 
	\item $u(A_{i^*})\leq u(A_i)$ and $u(B_{j^*})\leq u(B_j)$. Obviously, 
	\begin{eqnarray*}
	&&u(A_i)+u(B_j) - (\frac{e}{1+e}\cdot u(A_{i^*}) + \frac{1}{1+e}\cdot u(B_{j^*})) \\
	&\geq& \frac{1}{1+e}\cdot u(A_{i})+ \frac{e}{1+e}\cdot u(B_{j}) \geq 0.
	\end{eqnarray*}
\end{enumerate}
Therefore, we conclude that $SU_{i,j}\geq SU_{i^*,j^*}/(1+e)$, Therefore, the PoA is bounded by~$1+e$. 
\end{proof}

\subsubsection{A Lower Bound Example.}

Consider the instance in Table~\ref{tab:PoA_tight}. We have $p_{1,1} = p_{2,2} = 1/2$, $p_{1,2} = e^{\epsilon} / (e^{\epsilon}+e^{2\epsilon-2\delta})\approx 1/2$, 
$p_{2,1} = e^{2\epsilon-2\delta}/(e^{\epsilon}+e^{2\epsilon-2\delta})\approx 1/2$. 
Hence, $a_{1,1} = b_{1,1} = \epsilon/2$, $a_{2,1} = b_{1,2}\approx (\epsilon-\delta)/2$, 
and $a_{2,2} = b_{2,2}  = \epsilon - \delta$. Similar to the analysis in Sect.~\ref{subsubsec:LB_LL}, 
we obtain that the price of anarchy of this instance is approximately $2-\frac{2\delta}{\epsilon}$, 
which is close to~2 as $\delta/\epsilon$ approaches~0.

\section{Non-Egoistic Game}


Without the egoistic property, a game instance with no PNE in the linear link model 
can be constructed, and another game instance with an unbounded PoA can also be given in the three models.
\subsection{No Existence Guarantee of PNE in the Linear Link Model} \label{app:equil}


Consider the two-party election game in the linear link model. As the instance illustrated in 
Table~\ref{tab:NoPNE_LL_no_ego}, it is \emph{not} an egoistic one (e.g., $u_B(A_1)> u_B(B_2)$). We derive that 
$p_{1,1} = (1+(60-100)/100)/2 = 0.3$, $p_{1,2} = (1+(60-25)/100)/2 = 0.675$, 
$p_{2,1} = (1+(25-100)/100)/2 = 0.125$, and $p_{2,2} = (1+(25-25)/100)/2 = 0.5$. Hence, we obtain 
the payoff matrix as illustrated in the bottom of Table~\ref{tab:NoPNE_BT}.

\begin{table}[ht]
\begin{center}
	\begin{tabular}[c]{ l l | l l }
		\multicolumn{4}{ c }{}\\
		$A$ & \multicolumn{1}{c}{}& $B$ & \\
		\hline
		$u_A(A_i)$ & $u_B(A_i)$ & $u_B(B_j)$ & $u_A(B_j)$\\
		\hline
		50  &  10  &  10  &  90\\
		5   &  20  &  5   &  20\\
		\hline
	\end{tabular}
	\vspace{7pt}\\
	\begin{tabular}[c]{ l | l }
		\centering
		$a_{1,1}$, $b_{1,1}$  &  $a_{1,2}$, $b_{1,2}$\\
		\hline
		$a_{2,1}$, $b_{2,1}$  &  $a_{2,2}$, $b_{2,2}$\\
	\end{tabular}
	$=$
	\begin{tabular}[c]{  l l | l l }
		\centering
		78, & 10  &  40.25, & 8.375\\
		\hline
		79.375, & 11.25  &  12.5, & 12.5\\
	\end{tabular}
	\vspace{6pt}
\caption{An example having no PNE in the linear link model ($m = n = 2$, $b = 100$).}
\label{tab:NoPNE_LL_no_ego}
\end{center}
\end{table}

From the bottom of Table~\ref{tab:NoPNE_LL_no_ego}, none of the states is a PNE (e.g., in state $(1,1)$, $A$ wants to deviate from his strategy to 2 because he or she will get the payoff 79.375 which is better 
than~78).

\subsection{Unbounded PoA in the Three Models} \label{app:poa}

In this subsection, we show that the two-party election game has unbounded 
PoA if it is \emph{not} egoistic. Let us consider the game instance illustrated in 
Table~\ref{tab:BadPoA_all_NoEgo}. Its payoff matrix is different and we will 
show that its PoA is unbounded in the linear link, softmax, and Bradley-Terry models. 

\begin{table}[ht]
\begin{center}
	\begin{tabular}[c]{ l l | l l }
		\multicolumn{4}{ c }{}\\
		$A$ & \multicolumn{1}{c}{}& $B$ & \\
		\hline
		$u_A(A_i)$ & $u_B(A_i)$ & $u_B(B_j)$ & $u_A(B_j)$\\
		\hline
		$\epsilon$  &  0  &  $\epsilon$  &  0\\
		0  &  $b$  &  0  &  $b$\\
		\hline
	\end{tabular}
\end{center}
\caption{An illustrating non-egoistic game instance for $m = n = 2$.}
\label{tab:BadPoA_all_NoEgo}
\end{table}

\subsubsection{The Linear Link Model}

Let $\epsilon > 0$ be a small constant. 
We derive that $p_{1,1} = p_{2,2} = 1/2$, 
$p_{1,2} = (1+(\epsilon-b)/b)/2$, and $p_{2,1} = (1+(b-\epsilon)/b)/2$. 
Hence, we obtain the payoff matrix as illustrated in Table~\ref{tab:BadPoA_LL_NoEgo}. 
Clearly, state $(1,1)$ is a PNE, so the PoA is at least $b/\epsilon$,  
which is unbounded as $\epsilon$ approaches to~0.

\begin{table}[ht]
\begin{center}
	\renewcommand{\arraystretch}{1.25}
	\begin{tabular}[c]{ l | l }
		\centering
		$a_{1,1}$, $b_{1,1}$  &  $a_{1,2}$, $b_{1,2}$\\
		\hline
		$a_{2,1}$, $b_{2,1}$  &  $a_{2,2}$, $b_{2,2}$\\
	\end{tabular}
	$=$
	\renewcommand{\arraystretch}{1.25}
\begin{tabular}[c]{ c c | c c }
	\centering
	$\frac{\epsilon}{2}$, & $\frac{\epsilon}{2}$  &  
	$b-\frac{\epsilon(b-\epsilon)}{2b}$, & 0\\[2pt]
	\hline
	0, & $b-\frac{\epsilon(b-\epsilon)}{2b}$  &  $\frac{b}{2}$, & $\frac{b}{2}$\\[3pt]
\end{tabular}
	\vspace{6pt}
	\caption{The payoff matrix of instance in Table~\ref{tab:BadPoA_all_NoEgo} in the linear link model.}
	\label{tab:BadPoA_LL_NoEgo}
\end{center}	
\end{table}

\subsubsection{The Softmax Model}

By definition of the softmax model, we derive that $p_{1,1} = p_{2,2} = 1/2$, 
$p_{1,2} = e^{\epsilon/b}/(e^{\epsilon/b} + e)$, and $p_{2,1} = e/(e^{\epsilon/b} + e)$. 
Hence, we obtain the payoff matrix as illustrated in the bottom of Table~\ref{tab:BadPoA_SM_NoEgo}. 
Clearly, state $(1,1)$ is a PNE, so the PoA is at least 
\[
\frac{b}{{2\epsilon e^{\epsilon}/(e^{\epsilon}+1)}},
\]
which is unbounded as $\epsilon$ approaches to~0.

\begin{table}[ht]
\begin{center}
	\renewcommand{\arraystretch}{1.25}
	\begin{tabular}[c]{ l | l }
		\centering
		$a_{1,1}$, $b_{1,1}$  &  $a_{1,2}$, $b_{1,2}$\\
		\hline
		$a_{2,1}$, $b_{2,1}$  &  $a_{2,2}$, $b_{2,2}$\\
	\end{tabular}
	$=$
	\renewcommand{\arraystretch}{1.25}
\begin{tabular}[c]{ c c | c c }
	\centering
	$\frac{\epsilon e^{\epsilon}}{e^{\epsilon}+1}$, & $\frac{\epsilon e^{\epsilon}}{e^{\epsilon}+1}$  &  
	$\frac{\epsilon e^{\epsilon}+eb}{e^{\epsilon}+1}$, & 0\\[2pt]
	\hline
	0, & $\frac{\epsilon e^{\epsilon}+eb}{e^{\epsilon}+1}$  &  $\frac{b}{2}$, & $\frac{b}{2}$\\[3pt]
\end{tabular}
	\vspace{6pt}
	\caption{The payoff matrix of instance in Table~\ref{tab:BadPoA_all_NoEgo} in the softmax model.}
	\label{tab:BadPoA_SM_NoEgo}
\end{center}	
\end{table}

\subsubsection{The Bradley-Terry Model}

By definition of the Bradley-Terry model, we derive 
that $p_{1,1} = p_{2,2} = 1/2$, $p_{1,2} = \epsilon/(\epsilon + b)$, 
and $p_{2,1} = b/(\epsilon +b)$. 
Hence, we obtain the payoff matrix as illustrated in the bottom of Table~\ref{tab:BadPoA_BT_NoEgo}. 
Clearly, state $(1,1)$ is a PNE, so the PoA is at least $b/\epsilon$,  
which is unbounded as $\epsilon$ approaches to~0.

\begin{table}[ht]
\begin{center}
	\renewcommand{\arraystretch}{1.25}
	\begin{tabular}[c]{ l | l }
		\centering
		$a_{1,1}$, $b_{1,1}$  &  $a_{1,2}$, $b_{1,2}$\\
		\hline
		$a_{2,1}$, $b_{2,1}$  &  $a_{2,2}$, $b_{2,2}$\\
	\end{tabular}
	$=$
	\renewcommand{\arraystretch}{1.25}
\begin{tabular}[c]{ c c | c c }
	\centering
	$\frac{\epsilon}{2}$, & $\frac{\epsilon}{2}$  &  
	$\frac{\epsilon^2 + b^2}{b+\epsilon}$, & 0\\[2pt]
	\hline
	0, & $\frac{\epsilon^2 + b^2}{b+\epsilon}$  &  $\frac{b}{2}$, & $\frac{b}{2}$\\[3pt]
\end{tabular}
	\vspace{6pt}
	\caption{The payoff matrix of instance in Table~\ref{tab:BadPoA_all_NoEgo} in the Bradley-Terry model.}
	\label{tab:BadPoA_BT_NoEgo}
\end{center}	
\end{table}

\section{Conclusions and Future Work}
\label{sec:future}

We summarize our results in Table~\ref{tab:summary}.

\begin{table}[ht]
\begin{center}
	\renewcommand{\arraystretch}{1.25}
	\begin{tabular}{ r|c|c|c| }
		 \multicolumn{1}{r}{}
		&  \multicolumn{1}{r}{Linear Link}
		&  \multicolumn{1}{r}{Bradley-Terry}
		&  \multicolumn{1}{r}{Softmax}\\
		\cline{2-4}
		PNE w/ egoism & \checkmark & $\times$ & \checkmark \\
		PNE w/o egoism & $\times$ & $\times$ & ?$^{\#}$  \\
		\cline{2-4}
		\hline
		\hline
		Worst PoA w/ egoism & $\leq 2^*$ & $\leq 2$ & $\leq 1+e$ \\
		Worst PoA w/o egoism & $\infty$ & $\infty$ & $\infty$ \\
		\cline{2-4}
	\end{tabular}
	\vspace{7pt}
\caption{$^*$: the bound is tight. $\infty$: unbounded. \checkmark: PNE always exists. 
$\times$: PNE does NOT always exist. $^{\#}$: Based on our simulation results, we conjecture that the game in the softmax model always has a PNE (even without egoism).}
\label{tab:summary}
\end{center}
\end{table}

In this paper, we only focus on pure Nash equilibria. The other equilibrium concepts, 
such as mixed Nash equilibria, approximate Nash equilibria, etc., also deserve 
further investigations. We conjecture that the two-party election game is not a 
smooth game, and different equilibrium concepts may have different tight bounds on 
the corresponding price of anarchy.

There is a still a gap between the upper bound and lower bound on the worst PoA of the egoistic 
two-party election game in the softmax model and the Bradley-Terry model as well. By examining the game instances randomly sampled, we conjecture that the upper bound is at most~2 in the softmax model and strictly below~2 in the Bradley-Terry model. 

It will be interesting to generalize our results to election games of two or more parties. 
Coalition between the parties can be also considered. Moreover, to design mechanisms toward 
settings of the linear link model or the softmax model is another interesting direction.

\end{document}